\newcommand\mlnode[1]{\fbox{\begin{tabular}{@{}c@{}}#1\end{tabular}}}
\newtheorem{definition}{Definition}
\newtheorem{theorem}{Theorem}
\newtheorem{proposition}{Proposition}
\newtheorem{assumption}{Assumption} 
\newcommand{\iso}{\varphi}
\newcommand{\isiso}[1]{\simeq_{#1}}
\newcommand{\plussyb}[1]{#1^{+}}
\newcommand{\plussybg}{\plussyb{G}}
\newcommand{\pseudo}[1]{\iso#1}
\newcommand{\pseudog}{\pseudo{\plussybg}}
\newcommand{\transf}[1]{\transfName(#1)}
\newcommand{\transfg}{\transf{\pseudog}}
\newcommand{\transfName}{t}
\newcommand{\advsubgraph}[3]{#1_{#2, #3}} 
\newcommand{\advprobdist}{p}
\newcommand{\advguess}{\rho}
\newcommand{\advguessuni}{\Phi_{S,I}}
\newcommand{\equivclasspart}[1]{%
  #1/{\sim}%
}
\newcommand{\equivclassset}[1]{%
  [#1]_{\sim}%
}
\title{Preventing active re-identification attacks on social graphs 
via sybil subgraph obfuscation}
\author{Sjouke Mauw$^{1,2}$, Yunior Ram\'{i}rez-Cruz$^2$ 
and Rolando Trujillo-Rasua$^3$\\ 
{\small $^1$DCS, $^2$SnT, University of Luxembourg}\\ 
{\small 6, av. de la Fonte, L-4364 Esch-sur-Alzette, Luxembourg}\\ 
{\small $^3$School of Information Technology, Deakin University}\\ 
{\small 221 Burwood Hwy., Burwood VIC 3125, Australia}\\ 
{\small \texttt{\{sjouke.mauw, yunior.ramirez\}\@@uni.lu}, 
\texttt{rolando.trujillo\@@deakin.edu.au}}} 
\begin{document}
\maketitle

\begin{abstract}
Active re-identification attacks constitute a serious threat 
to privacy-preserving social graph publication, because of the ability 
of active adversaries to leverage fake accounts, a.k.a.\ \emph{sybil nodes}, 
to enforce structural patterns that can be used to re-identify their victims 
on anonymised graphs. Several formal privacy properties have been enunciated 
with the purpose of characterising the resistance of a graph against 
active attacks. However, anonymisation methods devised on the basis 
of these properties have so far been able to address only restricted 
special cases, where the adversaries are assumed to leverage 
a very small number of sybil nodes. 
In this paper we present a new probabilistic interpretation 
of active re-identification attacks on social graphs. 
Unlike the aforementioned privacy properties, which model the protection 
from active adversaries as the task of making victim nodes indistinguishable 
in terms of their fingerprints with respect to all potential attackers, 
our new formulation introduces a more complete view, where the attack 
is countered by jointly preventing the attacker from retrieving 
the set of sybil nodes, and from using these sybil nodes 
for re-identifying the victims. Under the new formulation, 
we show that the privacy property $k$-symmetry, 
introduced in the context of passive attacks, provides a sufficient condition 
for the protection against active re-identification attacks leveraging 
an arbitrary number of sybil nodes. Moreover, we show 
that the algorithm \mbox{\textsc{K-Match}}, originally devised 
for efficiently enforcing the related notion of $k$-automorphism, 
also guarantees $k$-symmetry. Empirical results on several collections 
of synthetic graphs corroborate that our approach allows, for the first time, 
to publish anonymised social graphs (with formal privacy guarantees) 
that effectively resist the strongest active re-identification attack reported 
in the literature, even when it leverages a large number of sybil nodes. 
 
\end{abstract}

{\it Keywords: private social graph publication, anonymisation, active adversaries}

\section{Introduction}\label{sec-intro}

The last decade has witnessed a formidable explosion in the use of social 
networking sites. Although the discipline of social network analysis 
has existed already for quite some time, today's scientists
potentially have access 
as never before to massive amounts of social network data. Social graphs 
are a particular example of this type of data, in which vertices typically 
represent users (e.g.\ Facebook or Twitter users, e-mail addresses) 
and edges represent relations between these users (e.g.\ becoming ``friends", 
following someone, exchanging e-mails). 
The analysis of social graphs can help scientists and other actors 
to discover important societal trends, study consumption habits, 
understand the spread of news or diseases, etc. For these goals 
to be achievable, it is necessary that the holders of this information, 
e.g.\ online social networks, messaging services, among others, 
release samples of their social graphs. However, ethical considerations, 
increased public awareness, and reinforced legislation\footnote{For example, 
the European GDPR, which can be consulted at \url{https://ec.europa.eu/commission/priorities/justice-and-fundamental-rights/data-protection/2018-reform-eu-data-protection-rules_en}.} 
place an increasingly strong emphasis on the need to protect individuals' 
privacy via anonymisation. 

Social graphs have proven themselves a challenging data type to anonymise. Even 
a simple undirected graph, with arbitrary node labels and no attributes 
on vertices or edges, is susceptible of leaking private information, 
due to the existence of unique structural patterns that characterise 
some individuals, e.g.\ the number of friends or the relations 
in the immediate vicinity \cite{NS2009}. Many privacy attacks that 
solely rely on the underlying graph topology of the social 
graph exist~\cite{Abawajy2016}, and they are still 
effective~\cite{MRT2018robsyb}, 
despite advances on social graph anonymisation.  
A particularly effective privacy attack is the so-called 
\emph{active attack}, which 
uses a strategy 
consisting in inserting fake accounts, commonly referred to as \emph{sybils}, 
into the real network. Once inserted, these fake users interact 
with legitimate users and among themselves, and create structures 
that allow the adversary to retrieve the sybil nodes from a sanitised social graph 
and use the connection patterns between sybils and legitimate nodes 
to re-identify the original users and infer sensitive information 
about them, such as the existence of relations. 

The publication of social graphs that effectively resist active 
attacks was initially addressed by Trujillo-Rasua and Yero~\cite{TY2016}. 
They introduced the notion of $(k,\ell)$-anonymity, the first privacy property 
to explicitly model the protection of published graphs against active adversaries. 
A graph satisfying $(k,\ell)$-anonymity ensures that an adversary leveraging 
up to $\ell$ sybil nodes and knowing the pairwise distances of all victims 
to all sybil nodes, is still unable to distinguish each victim 
from at least $k-1$ other vertices in the graph. This privacy property served 
as the basis for defining several anonymisation methods for a particular case, 
namely the one where either $k>1$ 
or $\ell>1$~\cite{MauwTrujilloXuan2016,MauwRamirezTrujillo2016}. 
In other words, non-trivial anonymity ($k>1$) was only guaranteed against 
an adversary leveraging exactly one sybil node. Later, the introduction 
of the notion of $(k,\ell)$-adjacency anonymity~\cite{MRT2018} allowed 
to arbitrarily increase the values of $k$ for which a formal privacy guarantee 
can be provided, but the proposed methods remained unable to address scenarios 
where the adversary can leverage more than two sybil nodes. 
In consequence, until now no anonymisation method with theoretically sound 
privacy guarantees against active attackers leveraging three or more sybil nodes 
has been made available to data publishers. 

To remedy the aforementioned situation, in this paper we take a different 
approach to the problem of anonymising a graph that may have been victim 
of an active attack. Our approach differs from the ones 
based on $(k,\ell)$-(adjacency) anonymity in the fact that it quantifies 
the combined probability of success of the attacker 
in re-identifying her sybil nodes \textbf{and} using them to re-identify 
the victims, whereas $(k,\ell)$-(adjacency) anonymity-based methods 
implicitly assume that the adversary cannot be prevented from retrieving 
the sybil nodes, and \mbox{(over-)compensate} for this by quantifying the probability 
that victims are re-identified in terms of \textbf{any} vertex subset 
satisfying some minimal constraints. Our new approach leverages 
a new probabilistic interpretation of the adversary's success probability 
in the two stages of re-identification. This new formulation allowed us to point out  
some mismatches between the goal of privacy-preserving publication, 
namely ensuring some upper bound on the probability 
of each victim being re-identified, and the actual guarantees provided 
by existing privacy properties. 
As an interesting observation, 
we noticed that there exist graphs that, despite failing to satisfy 
existing formal privacy properties, are in fact secure against active attackers. 
More importantly, our new formulation allowed us to prove 
that $k$-symmetry~\cite{Wu2010}, 
an existing privacy property originally introduced for the scenario 
of passive attacks, i.e.\ attacks that do not use sybil accounts, guarantees 
an $1/k$ upper bound on the re-identification probability of each victim, 
regardless of the number of sybil nodes used by the adversary. 
This finding is of paramount importance, 
as it enables the use of any known anonymisation method ensuring $k$-symmetry 
for preventing active re-identification attacks via sybil subgraph obfuscation. 
In this sense, we additionally show that 
the algorithm \mbox{\textsc{K-Match}}~\cite{ZCO2009}, originally devised 
for efficiently enforcing the notion of $k$-automorphism, also provides 
a sufficient condition for ensuring $k$-symmetry.  

\vspace{1mm}
\noindent
{\bf Summary of contributions:} 
\begin{itemize}
\item We show that no privacy property in the literature 
characterises all anonymous graphs with respect to active attacks. 
\item We introduce a general definition of resistance to active attacks 
that can be used to analyse the actual resistance of a graph. 
\item We use the introduced privacy model to prove that $k$-symmetry, the 
strongest notion of anonymity against passive attacks, also protects 
against active attacks. 
\item Of independent interest is our proof that $k$-automorphism does not protect 
against active attacks. This is a surprising result, considering that 
$k$-automor\-phism and $k$-symmetry have traditionally been deemed 
as conceptually equivalent.
\item We prove that the algorithm \textsc{K-Match}, devised to ensure 
a sufficient condition for $k$-automorphism, also guarantees $k$-symmetry. 
\item We provide empirical evidence on the effectiveness of \textsc{K-Match} as an 
anonymisation strategy against the strongest active attack reported 
in the literature, namely the robust active attack presented 
in~\cite{MRT2018robsyb}, even when it leverages a large number of sybil nodes.
\end{itemize}

\vspace{1mm}
\noindent
{\bf Structure of the paper.} We discuss related work 
in Section~\ref{sec-related-work}, and describe our new probabilistic 
interpretation of the adversarial model for active re-identification attacks 
in Section~\ref{sec-adversarial-model}. Then we discuss the applicability 
of $k$-symmetry for modelling protection against active attackers 
in Section~\ref{sec-k-symmetry}, and show in Section~\ref{sec-k-match} 
that the algorithm \textsc{K-Match} efficiently provides a sufficient condition 
for $k$-symmetry. Finally, we empirically demonstrate the effectiveness 
of \textsc{K-Match} against the robust active attack from~\cite{MRT2018robsyb}
in Section~\ref{sec-experiments} and give our conclusions 
in Section~\ref{sec-concl}.

\section{Related work} 
\label{sec-related-work}

In this paper we focus on a particular family of properties 
for privacy-preserving publication of social graphs: 
those based on the notion of $k$-anonymity~\cite{S2001,Sweeney2002}. 
These privacy properties depend on assumptions about the type of knowledge 
that a malicious agent, the \emph{adversary}, possesses. 
According to this criterion, adversaries can be divided into two types. 
On the one hand, \emph{passive} adversaries rely on information 
that can be collected from public sources, such as public profiles 
in online social networks, where a majority of users keep unmodified 
default privacy settings that pose no access restrictions on friend lists 
and other types of information. A passive adversary attempts
to re-identify users in a published social graph by matching this information 
to the released data. On the other hand, \emph{active} adversaries not only 
use publicly available information, but also attempt to interact 
with the real social network before the data is published, with the purpose 
of forcing the occurrence of unique structural patterns that can be retrieved 
after publication and used for learning sensitive information. 

\vspace{1mm}
\noindent 
{\bf $k$-anonymity models against passive attacks.} 
$k$-anonymity is based on a notion of indistinguishability  
between users in a dataset, which is used to create equivalence 
classes of users that are pair-wise indistinguishable to the eyes of an attacker.  
Formally, given a symmetric, reflexive and transitive indistinguishability 
relation~$\sim$ on the users of a graph $G$, $G$ satisfies $k$-anonymity 
with respect to $\sim$ if and only if the equivalence class 
with respect to $\sim$ of each user in $G$ has cardinality at~least~$k$. 

Several graph-oriented notions of indistinguishably appear in the literature. 
For example, Liu and Terzi~\cite{LT2008} consider two users indistinguishable 
if they have the same degree. Their model is known as \emph{$k$-degree anonymity} 
and gives protection against attackers capable of accurately estimating the number 
of connections of a user. The notion of $k$-degree anonymity 
has been widely studied, and numerous anonymisation methods 
based on it have been proposed, 
e.g.~\cite{Lu2012,UMGA,Chester2013,Wang2014,Ma2015,Salas2015,Rousseau2017,Casas-Roma2017}. 
Zhou and Pei~\cite{ZP2008} assume a stronger attacker 
able to determine not only the connections of a user $u$, 
but also whether $u$'s friends (i.e.\ those users that $u$ is connected to)
are connected. 
This means that the adversary is assumed to know the induced subgraphs created 
by the users and their neighbours. 
It is simple to see that Zhou and Pei's model, known as 
\emph{$k$-neighbourhood anonymity}, is stronger than $k$-degree anonymity. 

The notion of \emph{$k$-automorphism}~\cite{ZCO2009} was introduced 
with the goal of modelling the knowledge of any passive adversary. 
Two users $u$ and $v$ in a graph $G$ are said to be automorphically equivalent, 
or indistinguishable, if $\varphi(u) = v$ for some automorphism $\varphi$ in $G$. 
The notion of $k$-automorphism ensures that every vertex in the graph 
is automorphically equivalent to $k-1$ other vertices. 
Although $k$-automorphism itself does not in general imply 
all other privacy properties (as we will show in Appendix~\ref{app-automorphism}), 
the method proposed in~\cite{ZCO2009} for enforcing the (stronger) 
\emph{$k$ different matches principle} does achieve this goal. 
Similar formulations of indistinguishability in terms of graph automorphisms 
were presented independently in the work on $k$-symmetry~\cite{Wu2010} 
and $k$-isomorphism \cite{Cheng2010}. While $k$-symmetry 
and $k$-automorphism have traditionally been viewed as equivalent, 
$k$-symmetry is actually stronger, and it does imply all other privacy properties 
for passive attacks. In this paper, we additionally show that, in the context 
of active attacks, $k$-symmetry always guarantees a $1/k$ 
upper bound on the re-identification probability for each vertex, 
which $k$-automorphism does not. 

A natural trade-off between the strength 
of the privacy notions and the amount of structural disruption caused 
by the anonymisation methods based on them has been empirically demonstrated 
in~\cite{ZCO2009}. 
The three privacy models described above form a hierarchy, which is displayed 
in the left branch of Figure~\ref{fig-hierarchy}. 
Privacy models tailored to active attacks also form a hierarchy, 
displayed in the right branch of Figure~\ref{fig-hierarchy}, 
which we describe next. Interrogation marks in~Figure~\ref{fig-hierarchy} 
indicate that connections between properties tailored for passive attacks 
and those tailored for active attacks have not been established yet, 
neither directly nor via some additional property. 

\begin{figure}
	\centering
	\begin{tikzcd}[arrows=Rightarrow, column sep=1cm, row sep=1.0cm]
		&
		\mlnode{?}
		&
		\\
		\mlnode{$k$-symmetry} \arrow[d] \arrow[ur, Leftrightarrow, red, "?"] \arrow[rr, Leftrightarrow, red, "?"]  
		&      
		& 
		{\mlnode{$(k, \ell)$-anonymity}} \arrow[d] \arrow[ul, Leftrightarrow, red, swap, "?", shift right=0ex]
		\\
		\mlnode{$k$-neighbourhood \\ anonymity} \arrow[d]          
		& 
		& 
		{\mlnode{$(k, \ell)$-adjacency \\ anonymity}}                  
		\\
		\mlnode{$k$-degree anonymity}                   
		&      
		&           
	\end{tikzcd}
\caption{A hierarchy of privacy properties. An arrow has the standard logical 
interpretation, i.e.\ $P \implies P'$ means that a graph satisfying $P$ 
also satisfies $P'$. \textbf{Left side:} models for passive attacks. 
\textbf{Right side:} models for active attacks. Interrogation marks indicate 
connections that have not been established yet. 
\label{fig-hierarchy}}
\end{figure}
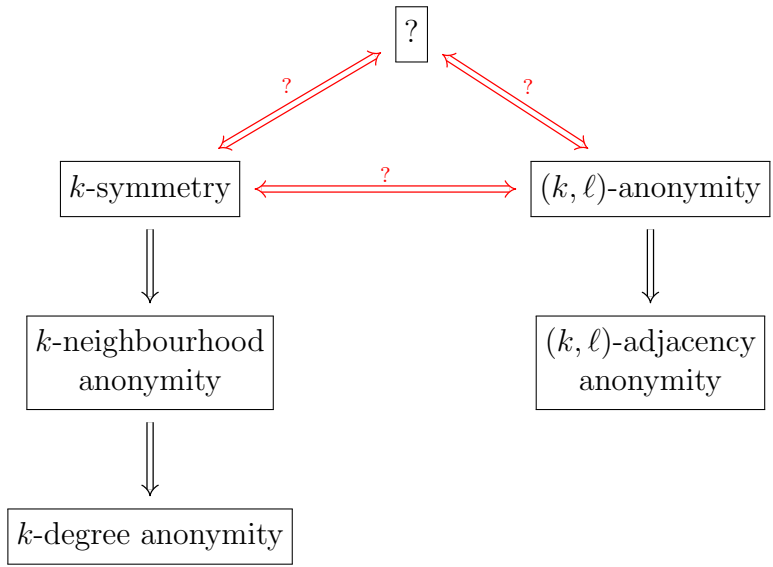

\vspace{1mm}
\noindent 
{\bf $k$-anonymity models against active attacks.} 
Backstrom et al.\ were the first 
to show the impact of active attacks 
in social networks back in 2007~\cite{BDK2007}. Their attack has been optimised 
a number of times, see~\cite{Peng2012,PLZW2014,MRT2018robsyb}, and two privacy 
models particularly tailored to measure the resistance of social graphs 
to this type of attack have been recently proposed~\cite{TY2016,MRT2018}. 
The first of those models is \emph{$(k, \ell)$-anonymity}, introduced in 2016 
by Trujillo-Rasua and Yero~\cite{TY2016}. They consider adversaries capable of 
re-identifying their own sets of sybil nodes in the anonymised graph. 
Adversaries are also assumed to know or able to estimate the distances of the
victims to 
the set of sybil nodes. This last assumption was weakened later 
in~\cite{MRT2018} by restricting the adversary's knowledge to distances between 
victims and sybil nodes of length one. That is, the adversary only knows 
whether the victim is connected to a sybil node. That restriction led to a 
weaker version of $(k, \ell)$-anonymity called \emph{$(k, \ell)$-adjacency 
anonymity}, as displayed in Figure~\ref{fig-hierarchy}. 

There exist three anonymisation 
algorithms~\cite{MauwTrujilloXuan2016,MauwRamirezTrujillo2016,MRT2018} that aim 
to create graphs satisfying $(k, \ell)$-(adjacency) anonymity. 
Their approach consists in determining a candidate set of sybil vertices 
in the original graph that breaks the desired anonymity property, 
and forcing via graph transformation that every vertex has a common pattern 
of connections with the sybil vertices shared by at least $k-1$ other vertices. 
A common shortcoming of these methods is that they only provide formal 
guarantees against attackers leveraging a very small number of sybil nodes 
(no more than two). This limitation 
seems to be an inherent shortcoming of the entire family of properties 
of which $(k,\ell)$-anonymity and $(k,\ell)$-adjacency anonymity 
are members. Indeed, for large values of $\ell$, which are required 
in order to account for reasonably capable adversaries, anonymisation methods 
based on this type of property face the problem that any change introduced 
in the original graph to render one vertex indistinguishable from others, 
in terms of its distances to a vertex subset, is likely to render this vertex unique 
in terms of its distances to other vertex subsets. 

An approach that has been used against both types of attack, passive and 
active, consists in randomly perturbing the graph. While intuition suggests 
that the task of re-identification becomes harder for the adversary 
as the amount of random noise added to a graph grows, no theoretically sound 
privacy guarantees have been provided for this scenario. 
Moreover, it has been shown in~\cite{MRT2018robsyb} that active attacks 
can be made robust against reasonably large amounts of random perturbation. 

\vspace{1mm}
\noindent 
{\bf Other privacy models.} For the sake of completeness, we finish 
this brief literature survey by mentioning privacy models that aim to quantify 
the probability that the adversary learns any sort of sensitive information. 
A popular example is differential privacy (DP)~\cite{DworkR14}, 
a semantic privacy notion 
which, instead of anonymising the dataset, focuses on the methods accessing 
the sensitive data, and provides a quantifiable privacy guarantee 
against an adversary who knows all but one entry in the dataset. 
In~the context of graph data, the notion of two datasets 
differing by exactly one entry can have multiple interpretations, 
the two most common being \emph{edge-differential privacy} 
and \emph{vertex-differential privacy}. 
While a number of queries, e.g.\ degree sequences~\cite{Karwa2012,HLMJ09} 
and subgraph counts~\cite{Zhang2015,KRSY14}, 
have been addressed under (edge-)differential privacy, the use of this notion 
for numerous very basic queries, e.g.\ graph diameter, remains a challenge. 
Recently, differentially private methods leveraging the randomized response strategy 
for publishing a graph's adjacency matrix were proposed in~\cite{ST2020}. 
While these methods do not necessarily view vertex ids as sensitive, data holders 
whose goal in preventing re-identification attacks 
is to prevent the adversary from learning the existence of relations may view 
this approach as an alternative to $k$-anonymity-based methods. 
Another DP-based alternative to $k$-anonymity-based methods consists 
in learning the parameters of a graph generative model under differential privacy 
and then using this model to publish synthetic graphs that resemble 
the original one in some structural 
properties~\cite{MW09,Sala2011,WW13,XCT14,Jorgensen2016,CMR2020}. 

Other privacy models  rely on the notion of \emph{adversary's prior belief}, 
defined as a probability distribution on sensitive values. 
For example, $t$-closeness~\cite{NTS2007} measures attribute 
protection in terms of the distance between the distribution 
of sensitive values in the anonymised dataset with respect to the distribution 
of sensitive attribute values in the original table. Such definition of prior 
belief is different to other works, 
such as $(\rho_1, \rho_2)$-privacy~\cite{EGS2003} 
and $\epsilon$-privacy~\cite{MGG2009}, where prior belief 
represents the adversary's knowledge in the absence of knowledge about the 
dataset. In either case, estimating the prior belief of the adversary 
is challenging, as discussed in~\cite{DworkR14}. 

\vspace{1mm}
\noindent 
{\bf Concluding remarks.} As illustrated in 
Figure~\ref{fig-hierarchy}, the development of $k$-anonymity models against 
passive and active attacks have been traditionally split and 
had no apparent intersection. This article provides, to the best of our 
knowledge, the first connection between the two developments. This is achieved 
by introducing a probabilistic model for active attacks that characterises all 
graphs that resists active attacks, of which $k$-symmetry and $(k, 
\ell)$-anonymity are proven to be sufficient, yet not necessary, conditions.

\section{Probabilistic adversarial model}
\label{sec-adversarial-model}

Our adversarial model is a generalisation of the model introduced in 
\cite{MRT2018robsyb}, which captures 
the capabilities of an active attacker and allows one to analyse the
resistance of anonymisation methods to active attacks. Such analysis 
is expressed as a three-step game 
between the attacker and the defender. In the first step the 
attacker is allowed to interact with 
the network, insert sybil accounts, and establish links with other users 
(called the victims). The defender uses the second step to 
anonymise and perturb the network, which was previously manipulated by the 
attacker. Lastly, the attacker receives the anonymised 
network and makes a guess on the pseudonyms used to anonymise the victims. 
Each of these steps is formalised in what follows. 

\begin{figure}[!th]
	\newcommand{\scaleValue}{1}
	\newcommand{\spaceValue}{2cm}
	\centering
	\begin{tabular}{ccc}
		\subfigure[A fragment of the original graph.\label{fig-original}]
		{\includegraphics[scale=\scaleValue]{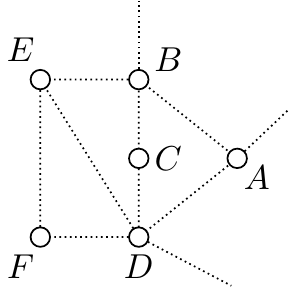}}&
		\quad\quad\quad\quad&
		\subfigure[Sybil nodes are added and victim fingerprints are created. 
		\label{fig-sybils}]
		{\includegraphics[scale=\scaleValue]{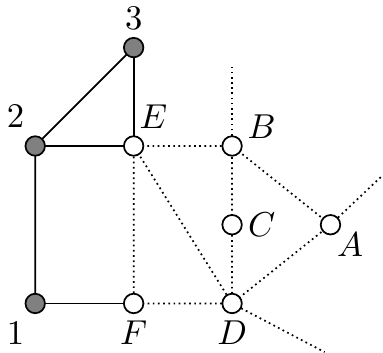}}\\
		\subfigure[Pseudonymised graph is released.]
		{\includegraphics[scale=\scaleValue]{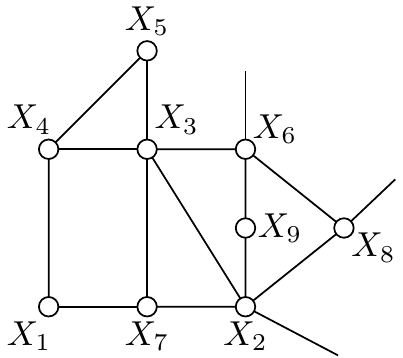} 
			\label{fig-anonymise}}&&
		\subfigure[Attacker subgraph is recovered, victims are re-identified, 
		and the existence of a relation is revealed. \label{fig-reidentification}]
		{\includegraphics[scale=\scaleValue]{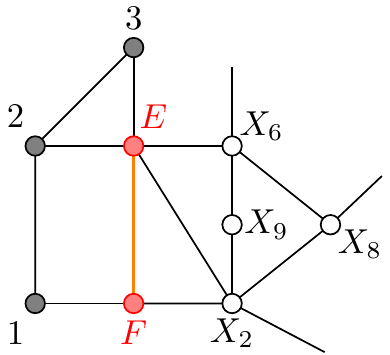}}
	\end{tabular}
	\caption{An active re-identification attack viewed as an attacker-defender game. 
	} 
	\label{fig-active-attack}
\end{figure}

\subsection{Attacker subgraph creation}

The attacker-defender game starts with a graph $G=(V,E)$ representing a 
snapshot of a social network, as in Figure~\ref{fig-original}. The attacker 
knows a subset of the users, called the victims and denoted $I$, 
but not the connections between them. The attacker is allowed to insert a set 
of sybil nodes $S$ into $G$ and establish connections with their victims. 

This step of the attack transforms the original graph $G=(V,E)$ into a 
graph $\plussybg =(V', E')$ satisfying the following two properties: i) $V' = 
V 
\cup 
S$ and ii) $E' \setminus E \subseteq (S\times 
S) \cup (S\times I) \cup (I \times S)$. The second condition says that 
relations 
established by 
the adversary are 
constrained to 
the set 
of sybil and victim nodes. 
We 
call the resulting graph  $\plussybg$ the 
\emph{sybil-extended} graph of $G$. 
An example of a sybil-extended graph is depicted 
in Figure~\ref{fig-sybils}.

The attacker does not know the entire graph 
$\plussybg$, unless the original graph was empty. The adversary knows, however, 
the subgraph formed by the set of sybil nodes $S$, their connections to the 
victims, and the victim set $I$. This notion of adversary knowledge is 
formalised next.

\begin{definition}[Adversary knowledge]\label{def-advknowledge}
Let $G = (V, E)$ be an original graph and $\plussybg =(V \cup S, E')$ the 
sybil-extended graph created by an adversary that targets a set of 
victims $I \subseteq V$. The \emph{adversary knowledge} 
is defined as the subgraph $\advsubgraph{G}{S}{I}$ of $G$ defined by 
\[
\advsubgraph{G}{S}{I} = 
(S \cup I, 
\{(u,v)\in E 
\mid \{u, v\} \subseteq S \cup I \wedge \{u, v\} \not\subseteq I\})
\]
\end{definition}

Note that connections between victims are not part of the adversary 
knowledge. 

\subsection{Pseudonymisation and perturbation}

When the defender decides to publish the graph $\plussybg$, she 
pseudonymises it by replacing the real user identities with pseudonyms. 
That is to say, the defender 
obtains $\plussybg$ and constructs an isomorphism 
$\iso$
from $\plussybg$ to $\pseudog$. 
An \emph{isomorphism} between two graphs $G=(V,E)$ and $G'=(V',E')$ is
a bijective function $\iso\colon V \to V'$, such that
$\forall v_1,v_2\in V \colon (v_1,v_2)\in E \iff (\iso(v_1),\iso(v_2))\in E'$.
Two graphs are \emph{isomorphic}, denoted by $G\isiso{\iso} G'$, or
briefly $G\isiso{} G'$, if there exists an isomorphism $\iso$ between them. 
Given a subset of vertices $S \subseteq V$, we will often use $\iso S$ to 
denote the set $\{\iso(v) | v \in S\}$. In Figure~\ref{fig-anonymise} 
we illustrate a pseudonymisation of the graph 
in Figure~\ref{fig-sybils}. 

We call $\pseudog$ the \emph{pseudonymised} graph. 
Pseudonymisation serves the purpose of removing 
personally identifiable information from the graph. Because pseudonymisation is 
insufficient to protect a graph against re-identification, 
the 
defender is also allowed to perturb the graph. This is captured by a 
non-deterministic procedure 
$\transfName$ 
that maps graphs to graphs. The procedure $t$ modifies $\pseudog$, 
resulting in the \emph{transformed} graph $\transfg$. We assume that $\transfg$ 
is ultimately made available to the public, hence it 
is known to the adversary. 

\subsection{Re-identification}

The last step of the attacker-defender game is where the attacker analyses the 
published graph $\transfg$  to re-identify her own sybil accounts and
the victims (see Figure~\ref{fig-reidentification}). 
This allows her to acquire new information, which was supposed 
to remain private, such as the fact that $E$ and $F$ are friends.

We define the output of the adversary re-identification attempt as a 
mapping $\advguess$ from the set of vertices $S \cup I$ to the set 
of vertices in $\transfg$. This represents the adversary's belief on the 
pseudonyms used to anonymise the attacker and victim vertices in $\transfg$. 
To 
account for uncertainty on the adversary's belief, 
we consider that the adversary assigns a probability value 
$\advprobdist(\advguess)$ to each mapping, denoting the probability that the 
adversary chooses $\advguess$ as the output of the re-identification attack. 
Let $\advguessuni$ be the universe of mappings from the set of vertices 
in $S \cup I$ to the set $\transfg$. 
The law of total probability allows us to quantify the adversary's 
probability of success in re-identifying one victim as follows. 

\begin{proposition}\label{prop-success-prob}
Let $G = (V, E)$ be an original graph, $\plussybg =(V \cup S, E')$ the 
sybil-extended graph created by an adversary that targets a set of 
victims $I \subseteq V$, and $\transfg$ the anonymised version of 
$\plussybg$ created by the defender. Then, the probability $A_{\transfg}^{S,I}(u)$ 
that the adversary successfully re-identifies a victim $u \in I$ in $\transfg$ is 
%

\begin{equation}\label{eq-success-prob-general}
A_{\transfg}^{S,I}(u) = \sum_{\advguess \in \advguessuni, \advguess(u) = 
\iso(u)} 
\advprobdist(\advguess).
\end{equation}
\end{proposition}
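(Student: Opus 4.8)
The plan is to invoke the law of total probability, treating the adversary's output $\advguess$ as a random variable over the universe $\advguessuni$ of candidate mappings. The argument is essentially definitional, so the real work lies in pinning down precisely which outcomes constitute a successful re-identification of the victim $u$.

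First I would make explicit what the ``correct'' target for $u$ is. The defender produces $\transfg$ by first pseudonymising $\plussybg$ through the isomorphism $\iso$, replacing each real identity $v$ by $\iso(v)$, and then perturbing the result with $\transfName$. Since the perturbation acts only on the edge structure while leaving the pseudonymous vertex labels untouched, the vertex of $\transfg$ that actually corresponds to the victim $u$ is exactly $\iso(u)$. Consequently, an output mapping $\advguess$ re-identifies $u$ if and only if it assigns $u$ the pseudonym $\iso(u)$, i.e.\ $\advguess(u) = \iso(u)$.

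Next I would set up the probability space. The adversary commits to a single mapping $\advguess \in \advguessuni$, drawn according to $\advprobdist$, so the events $\{\advguess = \phi\}$ for $\phi \in \advguessuni$ form a partition of the sample space, each carrying probability $\advprobdist(\phi)$. The event ``$u$ is correctly re-identified'' is then the disjoint union of those atoms $\{\advguess = \phi\}$ whose mapping satisfies $\phi(u) = \iso(u)$. Applying countable additivity over this partition, equivalently the law of total probability, yields $A_{\transfg}^{S,I}(u) = \sum_{\phi \in \advguessuni,\, \phi(u) = \iso(u)} \advprobdist(\phi)$, which is precisely~\eqref{eq-success-prob-general}.

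The only genuine subtlety, and hence the main obstacle, is conceptual rather than computational: it is the claim that the sought-after image is $\iso(u)$ and not some label altered by the perturbation step. This hinges on the modelling convention that $\transfName$ modifies $\pseudog$ without renaming its vertices, so that $\iso(u)$ remains a well-defined vertex of $\transfg$ throughout. Once this convention is stated explicitly, the summation follows immediately and no estimation or case analysis is required.
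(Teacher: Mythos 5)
Your proposal is correct and matches the paper's own treatment: the paper gives no explicit proof, simply noting in the preceding sentence that the identity follows from the law of total probability applied to the adversary's choice of mapping, which is exactly the partition-and-sum argument you spell out. Your additional remark that $\iso(u)$ remains a well-defined vertex of $\transfg$ because the perturbation $\transfName$ does not rename vertices is a reasonable explicit statement of a convention the paper leaves implicit.
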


In our analyses, we restrict the function $\advprobdist$ 
to be a probability distribution on the domain $\advguessuni$, 
i.e. $\sum_{\advguess \in \advguessuni} \advprobdist(\advguess) = 1$. 
We also assume that $\advprobdist$ satisfies the 
standard \emph{random worlds assumption} enunciated in~\cite{CLR2007,MKMGH2007}, 
which expresses that, in the absence of any information in addition to $\transfg$, 
any two isomorphic subgraphs in  $\transfg$ are indistinguishable for the 
adversary. We enunciate the random worlds assumption next, 
adapting the terminology to the one used in this paper. 

\begin{assumption}[Random worlds assumption \cite{CLR2007,MKMGH2007}] 
\label{pro-random-worlds}
Let $G = (V, E)$ be an original graph, $\plussybg =(V \cup S, E')$ 
the sybil-extended graph created by an adversary that targets 
a set of victims $I \subseteq V$, and $G' = \transfg$ the anonymised version 
of $\plussybg$ created by the defender. 
Let $\advguess_1$ and $\advguess_2$ be 
two bijective functions from $S\cup I$ to 
the set of vertices $V_{G'}$ in $G'$. Let 
$\advsubgraph{G'}{\advguess_1 S}{\advguess_1 I}$ and 
$\advsubgraph{G'}{\advguess_2 S}{\advguess_2 I}$ be the two attacker subgraphs 
in $G'$ that correspond to the adversary's guesses $\advguess_1$ and 
$\advguess_2$, respectively. If  $\advsubgraph{G'}{\advguess_1 S}{\advguess_1 
I}$ and 
$\advsubgraph{G'}{\advguess_2 S}{\advguess_2 I}$ are isomorphic, then 
$\advprobdist(\advguess_1) 
= \advprobdist(\advguess_2)$. 
\end{assumption}

In the remainder of this article, we will analyse the effectiveness of various 
anonymisation procedures by calculating the success probability of the 
adversary based on Proposition~\ref{prop-success-prob}, and we will often 
resort to Assumption~\ref{pro-random-worlds} when reasoning about the adversary's 
belief $\advguess$.

\section{Applicability of current privacy properties against active attacks}
\label{sec-k-symmetry}

In this section we make, to the best of our knowledge, the first connection 
between passive and active attacks by formally proving that $k$-symmetry 
provides protection against active attacks. We also prove that $k$-symmetry is 
incomplete, just like $(k, \ell)$-anonymity, in the sense that none of them 
characterises all anonymous graphs with respect to active attacks. Last, but 
not least, we show that neither $k$-symmetry implies $(k, 
\ell)$-anonymity, nor the other way around. 

\subsection{$k$-symmetry: an effective privacy model against active attacks}

We use the introduced privacy model to prove that $k$-symmetry, the strongest 
notion of anonymity against passive attacks, also protects against active 
attacks.

\begin{definition}[$k$-symmetry \cite{Wu2010}]\ 
Let $\Gamma_G$ be the universe of automorphisms in $G$. Two vertices $u$ and 
$v$ in $G$ are said to be automorphically 
equivalent, denoted $u \cong v$, if there exists an automorphism 
$\gamma \in \Gamma_G$ such that $\gamma(u) = v$. 
Because the relation $\cong_{}$ is an equivalence relation in the set of 
vertices of $G$, let $[u]_{\cong}$ be the equivalence class of $u$. 
$G$ is said to satisfy $k$-symmetry if for every vertex $u$ it 
holds 
that $|[u]_{\cong}| \geq k$.
\end{definition}

\begin{theorem}\label{theo-k-symmetry}
	Let $G' = (V', E')$ be an original graph, $\plussybg =(V' \cup S, E')$ the 
	sybil-extended graph created by an adversary that targets a set of 
	victims $I \subseteq V'$, and $\transfg = (V, E)$ the anonymised version of 
	$\plussybg$ created by the defender. 
	If $\transfg$ satisfies $k$-symmetry, then for every vertex $u \in I$ the 
	probability of the adversary guessing the output of $\iso(u)$ is lower than 
	or 
	equal to $1/k$.
\end{theorem}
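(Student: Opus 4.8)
The plan is to exploit the random worlds assumption (Assumption~\ref{pro-random-worlds}) together with the automorphisms guaranteed by $k$-symmetry, showing that for every guess that re-identifies $u$ correctly there are at least $k-1$ equiprobable guesses that re-identify $u$ incorrectly; this caps the total probability mass on correct guesses at $1/k$. First I would fix the victim $u \in I$ and, using Proposition~\ref{prop-success-prob}, write $A_{\transfg}^{S,I}(u) = \sum_{\advguess \in C} \advprobdist(\advguess)$ where $C = \{\advguess \in \advguessuni : \advguess(u) = \iso(u)\}$ is the set of all guesses that place $u$ at its true pseudonym $\iso(u)$. Since $\transfg$ satisfies $k$-symmetry, the class $[\iso(u)]_{\cong}$ contains at least $k$ vertices, so I can select automorphisms $\gamma_1, \dots, \gamma_k \in \Gamma_{\transfg}$ (taking $\gamma_1$ to be the identity) for which the images $\gamma_1(\iso(u)), \dots, \gamma_k(\iso(u))$ are pairwise distinct vertices of $\transfg$.

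Next, for each $i$ I would consider the map $\advguess \mapsto \gamma_i \circ \advguess$ on $\advguessuni$. As $\gamma_i$ is a bijection of the vertex set of $\transfg$, this is a bijection of $\advguessuni$; moreover it preserves injectivity of guesses, so it is compatible with the domain on which the random worlds assumption is phrased. The key structural fact to verify is that $\gamma_i$ carries the attacker subgraph $\advsubgraph{\transfg}{\advguess S}{\advguess I}$ isomorphically onto $\advsubgraph{\transfg}{\gamma_i \advguess S}{\gamma_i \advguess I}$: because $\gamma_i$ preserves edges and non-edges and maps $\advguess I$ bijectively onto $\gamma_i \advguess I$, it sends exactly the edges retained by Definition~\ref{def-advknowledge} (those with both endpoints in $\advguess S \cup \advguess I$ but not both in $\advguess I$) to the edges retained on the image side. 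Hence Assumption~\ref{pro-random-worlds} applies, giving $\advprobdist(\gamma_i \circ \advguess) = \advprobdist(\advguess)$ for every $\advguess$.

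Finally I would set $C_i = \{\gamma_i \circ \advguess : \advguess \in C\}$ and record two observations. By the previous step together with the change of variables along the bijection $\advguess \mapsto \gamma_i \circ \advguess$, we get $\sum_{\advguess \in C_i} \advprobdist(\advguess) = \sum_{\advguess \in C} \advprobdist(\advguess) = A_{\transfg}^{S,I}(u)$ for each $i$. On the other hand, every guess in $C_i$ maps $u$ to $\gamma_i(\iso(u))$, and since these target vertices are pairwise distinct across $i = 1, \dots, k$, the sets $C_1, \dots, C_k$ are pairwise disjoint subsets of $\advguessuni$. Combining disjointness with $\sum_{\advguess \in \advguessuni} \advprobdist(\advguess) = 1$ and nonnegativity of $\advprobdist$ yields $k \cdot A_{\transfg}^{S,I}(u) = \sum_{i=1}^{k}\sum_{\advguess \in C_i}\advprobdist(\advguess) \le 1$, that is, $A_{\transfg}^{S,I}(u) \le 1/k$.

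The main obstacle is the verification in the second paragraph: confirming that post-composition with an automorphism genuinely produces an isomorphic attacker subgraph in the precise sense of Definition~\ref{def-advknowledge} --- in particular that the exclusion of victim--victim edges is respected by $\gamma_i$ --- so that Assumption~\ref{pro-random-worlds} is legitimately invoked, together with the minor bookkeeping that the guesses $\gamma_i \circ \advguess$ remain in the domain on which $\advprobdist$ is defined. Once this is settled, the disjointness argument and the final counting are routine.
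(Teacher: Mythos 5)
Your proof is correct and follows essentially the same route as the paper's: compose the correct guesses with $k$ automorphisms that send $\iso(u)$ to $k$ pairwise distinct vertices, invoke the random worlds assumption (via the fact that automorphisms preserve the attacker subgraph up to isomorphism) to get equiprobability, and use disjointness of the resulting translates to bound the total mass by $1/k$. Your set-wise packaging --- applying the global bijections $\advguess\mapsto\gamma_i\circ\advguess$ to the whole set $C$ at once, so that disjointness of the $C_i$ is immediate from the distinct images of $u$ --- is in fact a tidier organisation than the paper's element-by-element construction, which requires a separate two-case argument to show that the $k$-element families built from distinct correct guesses do not collide.
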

\begin{proof}
	Let $G$ be a shorthand 
	notation for $\transfg$. Let $\advguessuni$ be the universe of mappings 
	from 
	the set of vertices 
	in 
	$S \cup I$ to the set of vertices in $G$.
	We define a relation $\sim$ between adversary's guesses in 
	$\advguessuni$ by 
	
	\[\advguess \sim 
	\advguess' \iff \advsubgraph{G}{\advguess S}{\advguess I} \isiso{} 
	\advsubgraph{G}{\advguess' S}{\advguess' I}
	\]

	Because $\isiso{}$ is an equivalence relation, it 
	follows that $\sim$ is also an equivalence relation. We use 
	$\equivclasspart{\advguessuni}$ to denote the partition of $\advguessuni$ 
	into the
	set of equivalence classes with respect to $\sim$, and 
	$\equivclassset{\advguess}$ to denote the equivalence class containing 
	$\advguess$. Consider, given a victim $u$, a successful adversary guess 
	$\advguess_0 \in \advguessuni$, i.e.\ a mapping satisfying that 
	$\advguess_0(u) 
	= 
	\varphi(u)$. 
	Our first proof step is about showing that there exist $k-1$ other 
	mappings $\advguess_1, \ldots, \advguess_{k-1}$ in 
	$\equivclassset{\advguess}$ 
	satisfying that
	
	\begin{equation}\label{eq1-sym}
	\forall i, j \in \{0, \ldots, k-1\} \colon i \neq j \implies \advguess_i(u) 
	\neq 
	\advguess_j(u) \text{.}
	\end{equation}
	
	Let $\advguess_0(u) = v$. 
	Because $G$ satisfies $k$-symmetry, it follows that there exist $k-1$ 
	different vertices $\{v_1, \ldots, v_{k-1}\}$ that are automorphically 
	equivalent to $v$. That is to say, there exist $k-1$ automorphisms 
	$\gamma_1, 
	\ldots, \gamma_{k-1}$ in $\Gamma_G$ such that $\forall 
	i \in \{1, \ldots, k-1\} \colon \gamma_i(v) = v_i \neq v$. Now, 
	consider the mappings 
	$\advguess_i\colon S \cup I \rightarrow 
	S_i \cup I_i$ defined by $\advguess_i = \gamma_i \circ \rho_0$, for 
	every $i \in \{1, \ldots, k-1\}$. On the one hand, given that 
	$\gamma_1, \cdots, \gamma_{k-1}$ are automorphisms, it follows that 
	$\advsubgraph{G}{S_0}{I_0} \isiso{\gamma_i} 
	\advsubgraph{G}{S_i}{I_i}$, for every $i \in \{1, \ldots, k-1\}$, which 
	implies 
	that $\advguess_0 \sim \advguess_i$. On the other hand, 
	$\rho_i(u) = u_i \neq u_j = \rho_j(u)$ for every $i \neq j \in \{0, 
	\ldots, k-1\}$. This allows us to conclude that $\rho_0, \ldots, 
	\rho_{k-1}$ 
	are pairwise different and that $\{\rho_0, \ldots, \rho_{k-1}\} \subseteq 
	\equivclassset{\advguess}$.
	
	Our second proof step consists of showing that, given two mappings 
	$\advguess_0$ and $\advguess_0'$ in  $\advguessuni$ such that 
	$\advguess_0(u) = 
	\advguess_0'(u) = v$, and the mappings 
	$\{\advguess_1, \ldots, 
	\advguess_{k-1}\}$ and $\{\advguess_1', \ldots, 
	\advguess_{k-1}'\}$ constructed as previously, it holds that 
	
	\[
	\advguess_0 \neq \advguess_0' \implies \advguess_i \neq \advguess_j' \ 
	\forall 
	i, j \in \{1, 
	\ldots, k-1\} \text{.}
	\]

	Let $x \in S \cup I$ such that $\rho_0(x) 
	\neq \rho_0'(x)$. Take any two integers $i, j \in \{1, 
	\ldots, k-1 \}$. We analyse two cases.
	
	\noindent \emph{Case 1 $(i = j)$.}
	Let $\rho_0(x) = y$ and $\rho_0'(x) = y'$. By construction, $\rho_i(x) = 
	\gamma_i(\rho_0(x)) = \gamma_i(y)$ and $\rho_i'(x) = 
	\gamma_i(\rho_0'(x)) = \gamma_i(y')$. The fact that $\gamma_i$ is a 
	bijective 
	function and that $y \neq y'$ gives that $\gamma_i(y) \neq \gamma_i(y')$, 
	which 
	implies that $\rho_i \neq \rho_i'$.
	
	\noindent \emph{Case 2 $(i \neq j)$.}
	Observe that $\rho_i(u) = \gamma_i(\rho_0(u)) = \gamma_i(v) = v_i$ and 
	$\rho_j'(u) = 
	\gamma_j(\rho_0'(u)) = \gamma_j(v) = v_j$. Because $v_i \neq v_j$ it 
	follows 
	that $\rho_i(u) \neq \rho_j'(u)$, hence $\rho_i \neq \rho_j'$. 
	
	The last proof step consists of using the formula to calculate adversary 
	success to obtain a probability bound. The adversary's 
	probability of success in re-identifying a victim $u \in I$ is calculated 
	by,
	
	\[
	\sum_{\advguess \in \advguessuni, \advguess(u) = \iso(u)} 
	\advprobdist(\advguess).
	\]
	
	Let $\advguess_1^0, \ldots, \advguess_n^0$ be all functions in 
	$\advguessuni$ 
	satisfying $\advguess_1^0(u) = \advguess_2^0(u) = \cdots = \advguess_n^0(u) 
	= 
	\iso(u)$. It follows that the probability of success of the adversary is 
	equal 
	to $\advprobdist(\advguess_1^0) + \cdots + \advprobdist(\advguess_n^0)$. 
	Now, for each $\advguess_i$, consider the mappings $\advguess_i^1, \ldots, 
	\advguess_i^{k-1}$ defined by $\advguess_i^j = \varphi_j \circ 
	\advguess_i^0$, 
	for 
	every $j \in \{1, \ldots, k-1\}$. Previously we proved the following two 
	intermediate results. 
	
	\begin{enumerate}
		\item For every $i \in \{1, \ldots, n\}$, the set $\{\advguess_i^0, 
		\advguess_i^1, \ldots, \advguess_i^{k-1}\} \subseteq \advguessuni$ has 
		cardinality $k$ 
		and its 
		elements satisfy $\advguess_i^0 \sim
		\advguess_i^1 \sim \ldots \sim \advguess_i^{k-1}$. 
		\item $\forall i, j \in \{1, \ldots, n\} \implies \{\advguess_i^0, 
		\advguess_i^1, \ldots, \advguess_i^{k-1}\}\cap \{\advguess_j^0, 
		\advguess_j^1, \ldots, \advguess_j^{k-1}\} = \emptyset$.
	\end{enumerate}
	
	The 
	second result and the fact that $\advprobdist$ is a probability 
	distributions 
	gives, 
	
	\[
	\sum_{i \in \{1, \ldots, n\}} \sum_{j \in \{0, \ldots, k-1\}} 
	\advprobdist(\advguess_i^j) \leq 1 \text{.}
	\]
	
	We use the first result and the random 
	worlds assumption (Proposition~\ref{pro-random-worlds}) to conclude that 
	$\advprobdist(\advguess_i^0) = \advprobdist(\advguess_i^1) = \cdots = 
	\advprobdist(\advguess_i^{k-1})$, for every $i \in \{1, \ldots, n\}$, which 
	gives, 
	
	\[
	\sum_{i \in \{1, \ldots, n\}} \sum_{j \in \{0, \ldots, k-1\}} 
	\advprobdist(\advguess_i^j) = \sum_{i \in \{1, \ldots, n\}} k 
	\advprobdist(\advguess_i^0)
	\leq 1 \text{.}
	\]
	
	The last inequality states that $\advprobdist(\advguess_1^0) + \cdots + 
	\advprobdist(\advguess_n^0) \leq 1/k$.  \qed 
\end{proof}

\subsection{$k$-symmetry \emph{versus} $(k, \ell)$-anonymity}

As proven in Theorem~\ref{theo-k-symmetry}, $k$-symmetry provides 
protection against active 
attacks regardless of the number of sybil nodes inserted by the attacker, as 
opposed to $(k, \ell)$-anonymity which uses $\ell$ as a parameter on the 
maximum number of sybil nodes. In spite of that, $(k, \ell)$-anonymity is not 
weaker than $k$-symmetry. As we prove next, they are in fact incomparable. 

\begin{theorem}
Let $\mathcal{G}_{k, \ell}$ be the universe of anonymised graphs such that no 
adversary with $\ell$ sybil nodes or less can re-identify a victim with 
probability lower or equal than $1/k$. 
There exist $k > 1$ and graphs $G, G', G'' \in \mathcal{G}_{k, \ell}$ such that:
\begin{itemize}
	\item $G$ satisfies $k$-symmetry, but $G$ does not satisfy $(k, 
	\ell)$-anonymity for some $\ell \geq 1$. 
	\item $G'$ satisfies $(k, \ell)$-anonymity for some $\ell \geq 1$, but $G'$ 
	does not satisfy $k$-symmetry. 
	\item $G''$ neither satisfy $k$-symmetry nor $(k, \ell)$-anonymity for some 
	$\ell \geq 1$. 
\end{itemize} 
\end{theorem}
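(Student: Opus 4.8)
The plan is to prove all three separations by \emph{explicit construction}, exploiting the fundamental mismatch between the two notions: $k$-symmetry is a \emph{global} property, asserting the existence of automorphisms of the whole graph, whereas $(k,\ell)$-anonymity is a \emph{local, metric} property, asserting that the distance fingerprints of vertices with respect to every sybil set of size at most $\ell$ have multiplicity at least $k$. Since these requirements constrain different features of a graph, I expect each separation to follow from a small graph (or a family parameterised by $k$) in which one feature is present and the other absent, together with a verification that the graph lies in $\mathcal{G}_{k,\ell}$, i.e.\ resists the probabilistic attack of Section~\ref{sec-adversarial-model} with success probability at most $1/k$.

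For the first item (the graph $G$ satisfying $k$-symmetry but not $(k,\ell)$-anonymity), membership in $\mathcal{G}_{k,\ell}$ is immediate from Theorem~\ref{theo-k-symmetry}, so the only work is to exhibit a $k$-symmetric graph whose metric fingerprints are too informative. The key observation is that if $u$ and $v=\gamma(u)$ are automorphically equivalent then $d(u,s)=d(v,\gamma(s))$, so automorphic twins are forced to agree on their distances to $\gamma(S)$ rather than to a \emph{fixed} sybil set $S$; a global symmetry therefore gives no control over fingerprints relative to one chosen $S$. I would take $G$ to be a vertex-transitive graph such as a cycle $C_n$ (which is $n$-symmetric, hence in $\mathcal{G}_{k,\ell}$ for all $k\le n$ by Theorem~\ref{theo-k-symmetry}) and choose a sybil set $S$ of size at most $\ell$ --- for instance two adjacent vertices --- for which the resulting distance vectors are pairwise distinct, making every vertex uniquely identifiable and so violating $(k,\ell)$-anonymity for $k\ge 2$.

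For the remaining two items the graphs are \emph{not} $k$-symmetric, so the bound of Theorem~\ref{theo-k-symmetry} is unavailable and membership in $\mathcal{G}_{k,\ell}$ must be established directly from the probabilistic model. For $G'$ (satisfying $(k,\ell)$-anonymity but not $k$-symmetry) I would build a \emph{rigid} graph by decorating a base graph with asymmetric gadgets placed so that they break all non-trivial automorphisms yet leave the distance fingerprints to every probe set of size at most $\ell$ with multiplicity at least $k$; an automorphism must preserve the entire structure, whereas $(k,\ell)$-anonymity only sees distances up to the resolution of $\ell$-element sets, so local asymmetry can coexist with uniform fingerprints. Security then follows from the fingerprint multiplicity: for any guess at least $k$ equally likely candidates survive for each victim, so the random-worlds success probability is at most $1/k$. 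For $G''$ (satisfying neither property) I would exploit the other half of the mismatch emphasised in the introduction: $(k,\ell)$-anonymity presupposes that the adversary can recover her sybils, whereas in the probabilistic model she must re-identify sybils \emph{and} victims jointly. I would therefore construct a rigid graph with revealing fingerprints (failing both formal properties) in which nonetheless the attacker subgraph $\advsubgraph{G''}{S}{I}$ admits at least $k$ isomorphic embeddings; by the random-worlds assumption these embeddings are indistinguishable, the adversary cannot localise her sybils, and the success probability stays at or below $1/k$.

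The main obstacle, as anticipated, is verifying membership in $\mathcal{G}_{k,\ell}$ for $G'$ and $G''$ without recourse to Theorem~\ref{theo-k-symmetry}: this requires reasoning explicitly about the equivalence classes of adversary guesses under the isomorphism relation $\sim$ (as in the proof of Theorem~\ref{theo-k-symmetry}) and counting isomorphic copies of the attacker subgraph. The delicate engineering is that $G'$ and especially $G''$ must simultaneously satisfy three competing constraints --- a small automorphism group (to fail $k$-symmetry), a prescribed fingerprint profile (to satisfy or to fail $(k,\ell)$-anonymity), and at least $k$ isomorphic attacker subgraphs (to remain in $\mathcal{G}_{k,\ell}$). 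I expect the cleanest route is to fix small concrete values, say $k=2$ and a small $\ell$, and to verify the three conditions for explicit finite graphs by direct inspection, rather than attempting a uniform parameterised family.
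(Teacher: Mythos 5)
Your overall strategy is the same as the paper's: the paper proves this theorem purely by exhibiting three small concrete graphs (Figure~\ref{fig-examples}), invoking Theorem~\ref{theo-k-symmetry} implicitly for the first one and asserting direct verification for the other two. Your first item is fully worked out and correct: a cycle $C_n$ is vertex-transitive, hence $n$-symmetric and in $\mathcal{G}_{k,\ell}$ by Theorem~\ref{theo-k-symmetry}, and an adjacent pair of probe vertices separates all metric representations, so $(k,\ell)$-anonymity fails for $k>1$ and $\ell\ge 2$; this is the same mechanism as the paper's ten-vertex example in Figure~\ref{fig-example-a}, and your observation that automorphic twins only agree on distances to $\gamma(S)$ rather than to a fixed $S$ is exactly the right explanation of why the two properties decouple.

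The gap is in items two and three: you describe the properties that $G'$ and $G''$ must have (rigidity via asymmetric gadgets, uniform fingerprints, at least $k$ isomorphic embeddings of the attacker subgraph) but never produce the graphs, and you explicitly defer the ``delicate engineering'' to future inspection. Since the theorem is a pure existence statement, the witnesses \emph{are} the proof; without them nothing has been established for these two items. For what it is worth, the required graphs are small and unglamorous: the paper's $G'$ is the five-vertex bowtie of Figure~\ref{fig-example-b} (two triangles sharing a vertex), which is $(2,1)$-anonymous but has a unique vertex of degree four and hence a singleton orbit, so it fails $k$-symmetry (indeed $k$-degree anonymity) for every $k>1$; its $G''$ is the seven-vertex graph of Figure~\ref{fig-example-c}, which is neither $2$-symmetric nor $(2,2)$-anonymous yet bounds the success probability of any two-sybil attack by $1/2$, precisely via the multiple-isomorphic-embeddings argument you sketch. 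So your plan would succeed if executed, but as written the existence of $G'$ and $G''$ rests on an unproved claim that graphs satisfying your three competing constraints exist.
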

\begin{proof}
Figure~\ref{fig-example-a} shows a \mbox{$2$-symmetric} graph $G$ which, 
for $2\le\ell\le 8$, does not satisfy $(k,\ell)$-anonymity for any $k>1$. 
Moreover, Figure~\ref{fig-example-b} 
shows a $(2,1)$-anonymous graph $G'$ which can be verified 
not to satisfy $k$-symmetry for any $k>1$. In fact, 
this graph even fails to satisfy $k$-degree anonymity 
for any $k>1$. An example of a graph $G''$ proving the correctness of the last 
statement is displayed in Figure~\ref{fig-example-c}. That graph is neither 
$2$-symmetric nor $(2, 2)$-anonymous.\qed 
\end{proof}

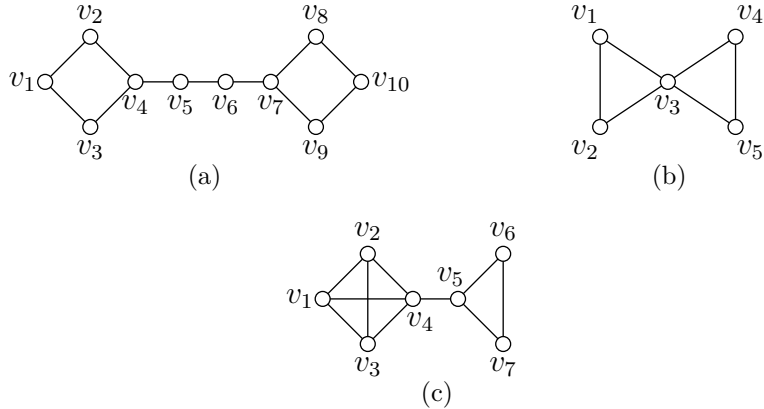
\begin{figure}[!th]

\begin{center}

\begin{tabular}{cc}
\subfigure[\label{fig-example-a}]{

\begin{tikzpicture}[inner sep=0.7mm, place/.style={circle,draw=black,
fill=white},gray1/.style={circle,draw=black!99, 
fill=black!75},gray2/.style={circle,draw=black!99, 
fill=black!25},transition/.style={rectangle,draw=black!50,fill=black!20,thick}, 
line width=.5pt]

\coordinate (v1) at (-2.1,0);
\coordinate (v2) at (-1.5,0.6);
\coordinate (v3) at (-1.5,-0.6);
\coordinate (v4) at (-0.9,0);
\coordinate (v5) at (-0.3,0);
\coordinate (v6) at (0.3,0);
\coordinate (v7) at (0.9,0);
\coordinate (v8) at (1.5,0.6);
\coordinate (v9) at (1.5,-0.6);
\coordinate (v10) at (2.1,0);

\draw[black] (v1) -- (v2) -- (v4) -- (v5) -- (v6) -- (v7) -- (v8) -- (v10) -- (v9) -- (v7); 
\draw[black] (v1) -- (v3) -- (v4);

\node[place] at (v1) {};
\node[place] at (v2) {};
\node[place] at (v3) {};
\node[place] at (v4) {};
\node[place] at (v5) {};
\node[place] at (v6) {};
\node[place] at (v7) {};
\node[place] at (v8) {};
\node[place] at (v9) {};
\node[place] at (v10) {};

\coordinate [label=center:{$v_1$}] (lv1) at (-2.4,0);
\coordinate [label=center:{$v_2$}] (lv2) at (-1.5,0.9);
\coordinate [label=center:{$v_3$}] (lv3) at (-1.5,-0.9);
\coordinate [label=center:{$v_4$}] (lv4) at (-0.9,-0.3);
\coordinate [label=center:{$v_5$}] (lv5) at (-0.3,-0.3);
\coordinate [label=center:{$v_6$}] (lv6) at (0.3,-0.3);
\coordinate [label=center:{$v_7$}] (lv7) at (0.9,-0.3);
\coordinate [label=center:{$v_8$}] (lv8) at (1.5,0.9);
\coordinate [label=center:{$v_9$}] (lv9) at (1.5,-0.9);
\coordinate [label=center:{$v_{10}$}] (lv10) at (2.5,0);

\end{tikzpicture}

}
&
\subfigure[\label{fig-example-b}]{

\begin{tikzpicture}[inner sep=0.7mm, place/.style={circle,draw=black,
fill=white},gray1/.style={circle,draw=black!99, 
fill=black!75},gray2/.style={circle,draw=black!99, 
fill=black!25},transition/.style={rectangle,draw=black!50,fill=black!20,thick}, 
line width=.5pt]

\coordinate (v1) at (-0.9,0.6);
\coordinate (v2) at (-0.9,-0.6);
\coordinate (v3) at (0,0);
\coordinate (v4) at (0.9,0.6);
\coordinate (v5) at (0.9,-0.6);

\draw[black] (v3) -- (v1) -- (v2) -- (v3) -- (v4) -- (v5) -- (v3); 

\node[place] at (v1) {};
\node[place] at (v2) {};
\node[place] at (v3) {};
\node[place] at (v4) {};
\node[place] at (v5) {};

\coordinate [label=center:{$v_1$}] (lv2) at (-1.1,0.9);
\coordinate [label=center:{$v_2$}] (lv3) at (-1.1,-0.9);
\coordinate [label=center:{$v_3$}] (lv4) at (0,-0.3);
\coordinate [label=center:{$v_4$}] (lv6) at (1.1,0.9);
\coordinate [label=center:{$v_5$}] (lv7) at (1.1,-0.9);

\coordinate [label=center:{\textcolor{white}{$v_0$}}] (lv0) at (-2.4,0);
\coordinate [label=center:{\textcolor{white}{$v_{10}$}}] (lv10) at (2.5,0);

\end{tikzpicture}

}\\

\multicolumn{2}{c}{

\subfigure[\label{fig-example-c}]{

\begin{tikzpicture}[inner sep=0.7mm, place/.style={circle,draw=black,
fill=white},gray1/.style={circle,draw=black!99, 
fill=black!75},gray2/.style={circle,draw=black!99, 
fill=black!25},transition/.style={rectangle,draw=black!50,fill=black!20,thick}, 
line width=.5pt]

\coordinate (v1) at (-1.5,0);
\coordinate (v2) at (-0.9,0.6);
\coordinate (v3) at (-0.9,-0.6);
\coordinate (v4) at (-0.3,0);
\coordinate (v5) at (0.3,0);
\coordinate (v6) at (0.9,0.6);
\coordinate (v7) at (0.9,-0.6);

\draw[black] (v1) -- (v2) -- (v4) -- (v5) -- (v6) -- (v7) -- (v5); 
\draw[black] (v4) -- (v3) -- (v1) -- (v4);
\draw[black] (v2) -- (v3);

\node[place] at (v1) {};
\node[place] at (v2) {};
\node[place] at (v3) {};
\node[place] at (v4) {};
\node[place] at (v5) {};
\node[place] at (v6) {};
\node[place] at (v7) {};

\coordinate [label=center:{$v_1$}] (lv1) at (-1.8,0);
\coordinate [label=center:{$v_2$}] (lv2) at (-0.9,0.9);
\coordinate [label=center:{$v_3$}] (lv3) at (-0.9,-0.9);
\coordinate [label=center:{$v_4$}] (lv4) at (-0.2,-0.3);
\coordinate [label=center:{$v_5$}] (lv5) at (0.2,0.3);
\coordinate [label=center:{$v_6$}] (lv6) at (0.9,0.9);
\coordinate [label=center:{$v_7$}] (lv7) at (0.9,-0.9);

\coordinate [label=center:{\textcolor{white}{$v_0$}}] (lv0) at (-2.4,0);
\coordinate [label=center:{\textcolor{white}{$v_{10}$}}] (lv10) at (2.5,0);

\end{tikzpicture}

}
}
\end{tabular}

\end{center}

\caption{Example graphs. (a) A $2$-symmetric graph not satisfying 
$(k,\ell)$-anonymity for $k>1$ and $2\le \ell\le 8$. (b) A $(2,1)$-anonymous graph  
not satisfying $k$-symmetry for $k>1$. (c) A graph where the success probability 
of any active attack leveraging $2$ sybil nodes is at most $1/2$, 
despite the graph neither satisfying $(2,2)$-anonymity nor $2$-symmetry.}
\label{fig-examples}
\end{figure}

Of independent interest is our proof that $k$-automorphism~\cite{ZCO2009} does 
not protect against active attacks. This is a surprising result, given that 
$k$-automorphism and $k$-symmetry have traditionally been
considered equivalent. 
We refer the interested reader to Appendix~\ref{app-automorphism}. 
 
\section{Algorithm \textsc{K-Match} guarantees $k$-symmetry}
\label{sec-k-match}

In this section we prove that the algorithm 
\textsc{K-Match}, proposed in~\cite{ZCO2009}  
as a sufficient condition to achieve $k$-automorphism, 
also guarantees $k$-symmetry. 
Given a graph $G$ and a value of $k$, 
the \textsc{K-Match} algorithm obtains a supergraph 
$G'$ of $G$ satisfying the following conditions:
\begin{enumerate}
\item $V_{G'}\supseteq V_G$ and $E_{G'}\supseteq E_G$.
\item There exist $k-1$ automorphisms $\gamma_1, \gamma_2, \ldots, \gamma_{k-1}$ of $G'$ such that:
\begin{enumerate}
\item For every $v\in V_{G'}$ and every $i\in\{1,\ldots,k-1\}$, $\gamma_i(v)\neq v$.
\item For every $v\in V_{G'}$ and every $i,j\in\{1,\ldots,k-1\}$, $i\neq j \Longleftrightarrow \gamma_i(v)\neq\gamma_j(v)$.
\item For every $v\in V_{G'}$ and every $i,j$ such that $1\le i<j\le k-1$, $\gamma_{i+j}(v)=\gamma_i(\gamma_j(v))=\gamma_j(\gamma_i(v))$, with addition taken modulo $k$.
\end{enumerate}
\end{enumerate}

To obtain $G'$, the algorithm first splits the vertices 
of $G'$ into $k$ groups and arranges them 
in a $k$-column 
matrix $M$ called the \emph{vertex alignment table} 
(\emph{VAT} for short). If $|V_G|$ is not a multiple of $k$, a number 
of dummy vertices is added to achieve this property.
The VAT is organised in such a manner that the number 
of graph editions to perform in the second step of the process 
is close to the minimum. For convenience, in what follows we will 
denote by $v_{ij}$ the vertex of $G'$ placed 
in position $M_{ij}$ of the VAT. 
The second step of the method consists in 
adding edges to $E_{G'}$ in such a way that conditions 2.a to 2.c 
are enforced. To that end, for every edge 
$(v_{ij},v_{pq})$, all edges of the form $(v_{i,j+t},v_{p,q+t})$, 
additions modulo $k$, are added to $E_{G'}$ if they did not previously exist. 
 
Figure~\ref{fig-ex-vat} shows an example of a VAT allowing 
to enforce $3$-automorphism on the graph 
of Figure~\ref{fig-sybils}\footnote{This table is not necessarily 
the one created by the first step of \textsc{K-Match}, but it serves to illustrate 
the second step, which is the one that guarantees 
the privacy property and will be the basis of the main result in this section.}. 
This VAT encodes two functions $f_1,f_2\colon V_{G'}\rightarrow V_{G'}$: 
$$f_1=\{(1,F),(F,D),(D,1),(C,A),(A,B),(B,C),(2,3),(3,E),(E,2)\},$$
that is, a function such that the image of every element is the one located 
one column to its right (modulo $3$) on the same row, and 
$$f_2=\{(1,D),(F,1),(D,F),(C,B),(A,C),(B,A),(2,E),(3,2),(E,3)\},$$
that is, a function such that the image of every element is the one located 
two columns to its right (modulo $3$) on the same row. 

\begin{figure}[h]
\begin{center}
\begin{tabular}{|c|c|c|}
\hline
\quad 1\quad$$&\quad F\quad$$&\quad D\quad$$\\
\hline
\quad C\quad$$&\quad A\quad$$&\quad B\quad$$\\
\hline
\quad 2\quad$$&\quad 3\quad$$&\quad E\quad$$\\
\hline
\end{tabular}
\end{center}
\caption{An example of a VAT for the graph shown in Figure~\ref{fig-sybils}.} 
\label{fig-ex-vat}
\end{figure}
 
In general, these functions are not automorphisms of $G'$ upon creation of the VAT. 
It is the second step of the method that will transform them into automorphisms 
by performing all necessary edge-copying operations. 
For example, the edge $(C,A)$ needs to be added to $G'$ because 
$(A,B)\in E_G$ but $(f_2(A),f_2(B))=(C,A)\notin E_G$; and $(A,3)$ 
needs to be added because $(B,E)\in E_G$ but $(f_2(B),f_2(E))=(A,3)\notin E_G$. 
Once the method is executed, each automorphism $\gamma_t$, $t\in\{1,\ldots,k-1\}$, 
defined in item~2 above is completely 
specified by the VAT, as $\gamma_t(v_{ij})=v_{i,j+t}$, with addition modulo $k$, 
for every $i\in\left\{1,\ldots,\left\lceil\frac{|V_G|}{k}\right\rceil\right\}$ 
and every $j\in\{1,\ldots,k\}$.  

We now show the link between the \textsc{K-Match} method and $k$-symmetry. 

\begin{theorem}\label{prop-km-k-sym}
Let $G=(V,E)$ be a graph and let $G'=(V',E')$ the result of applying 
algorithm \textsc{K-Match} to $G$ for some parameter $k$. 
Then, $G'$ satisfies $k$-symmetry. 
\end{theorem}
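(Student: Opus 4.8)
The plan is to work directly with the vertex alignment table (VAT) produced by \textsc{K-Match} and to show that the column-shift maps it encodes become genuine automorphisms of $G'$ after the edge-copying step, from which $k$-symmetry follows by an orbit-counting argument. I would first fix notation: let $r=\lceil |V|/k\rceil$ be the number of rows of the VAT, and recall that $v_{ij}$ denotes the vertex placed at position $M_{ij}$, with $i\in\{1,\ldots,r\}$ and $j\in\{1,\ldots,k\}$. For each $t\in\{1,\ldots,k-1\}$ I would take $\gamma_t\colon V'\to V'$ defined by $\gamma_t(v_{ij})=v_{i,j+t}$, column indices modulo $k$. Since every vertex of $G'$ occupies exactly one cell of the VAT, each $\gamma_t$ is a well-defined bijection that cyclically permutes the entries of every row, and the family is closed under composition with $\gamma_t\circ\gamma_{t'}=\gamma_{t+t'}$, so it realises the cyclic group $\mathbb{Z}/k\mathbb{Z}$ acting on $V'$; in particular $\gamma_t^{-1}=\gamma_{k-t}$.

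The central step is to verify that each $\gamma_t$ preserves $E'$ in both directions. Here I would invoke the edge-copying rule: whenever $(v_{ij},v_{pq})\in E'$, the method inserts every edge $(v_{i,j+s},v_{p,q+s})$ for $s\in\{1,\ldots,k-1\}$. This makes $E'$ closed under the simultaneous column shift by any amount $s$, which is exactly the statement that $(a,b)\in E'$ implies $(\gamma_s(a),\gamma_s(b))\in E'$. Applying this with $s=t$ yields edge-preservation in the forward direction, and applying it with $s=k-t$ together with $\gamma_t^{-1}=\gamma_{k-t}$ yields the converse, so $\gamma_t$ is an automorphism of $G'$. The point needing care is that the copying genuinely closes $E'$ under the whole shift action rather than merely enlarging it: because the shifts form a group of order $k$, any shift of an already-copied edge lies in the same shift-orbit and is therefore already present, so a single pass over the edges leaves the edge set shift-invariant.

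Once the $\gamma_t$ are known to be automorphisms, the remaining properties follow immediately from the VAT layout. For any vertex $v_{ij}$ and any $t\in\{1,\ldots,k-1\}$ we have $j+t\not\equiv j\pmod k$, hence $\gamma_t(v_{ij})\neq v_{ij}$; and for $t\neq t'$ we have $j+t\not\equiv j+t'\pmod k$, hence $\gamma_t(v_{ij})\neq\gamma_{t'}(v_{ij})$. Thus the $k$ vertices $v_{i1},\ldots,v_{ik}$, which are exactly $v_{ij},\gamma_1(v_{ij}),\ldots,\gamma_{k-1}(v_{ij})$ in some order, are pairwise distinct. Each is automorphically equivalent to $v_{ij}$: the identity witnesses $v_{ij}\cong v_{ij}$, and $\gamma_t$ witnesses $v_{ij}\cong\gamma_t(v_{ij})$. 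Hence $\{v_{i1},\ldots,v_{ik}\}\subseteq[v_{ij}]_{\cong}$ and $|[v_{ij}]_{\cong}|\ge k$.

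Since $v_{ij}$ ranges over all of $V'$ (the dummy vertices padding $|V|$ to a multiple of $k$ are ordinary vertices of $G'$ and are covered by the same argument), every automorphic equivalence class of $G'$ has cardinality at least $k$, which is precisely $k$-symmetry. I expect the main obstacle to be the edge-preservation claim of the second paragraph, specifically formalising that the copying step produces an edge set invariant under the full cyclic shift action, so that each $\gamma_t$ is a true automorphism and not merely an edge-adding map; once this shift-invariance is pinned down via the group structure of the column shifts, the fixed-point-free and distinct-image conditions and the resulting orbit bound are routine consequences of the VAT structure.
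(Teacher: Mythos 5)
Your proof is correct and follows essentially the same route as the paper's: both conclude $k$-symmetry by observing that the $k-1$ column-shift automorphisms $\gamma_1,\ldots,\gamma_{k-1}$ are fixed-point-free with pairwise distinct images, so every vertex together with its images forms at least $k$ distinct, automorphically equivalent vertices. The only difference is that you re-derive from the VAT layout and the edge-copying rule that these shifts are genuine automorphisms satisfying conditions 2.a and 2.b, whereas the paper simply invokes those conditions as guaranteed properties of the \textsc{K-Match} output.
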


\begin{proof}
Let $u\in V_{G'}$ be an arbitrary vertex of $G'$, 
and let $v_1=\gamma_1(u)$, $v_2=\gamma_2(u)$, \ldots, $v_{k-1}=\gamma_{k-1}(u)$ 
be the images of $u$ by the automorphisms $\gamma_1, \gamma_2, \ldots, \gamma_{k-1}$ 
enforced on $G'$ by the execution of \textsc{K-Match}. 
By definition, we have that $u \cong v_1 \cong v_2 \cong \ldots \cong v_{k-1}$ 
and, by conditions 2.a and 2.b, they are pairwise different. 
Thus, $\left\vert[u]_{\cong}\right\vert=k$, hence $G'$ is $k$-symmetric. \qed
\end{proof}

The most relevant consequence of Theorem~\ref{prop-km-k-sym} 
is that algorithm \textsc{K-Match} can also be used for protecting 
graphs against active adversaries, as it will ensure
that no victim is re-identified with probability greater than $1/k$.

\section{Experiments}\label{sec-experiments} 

The purpose of these experiments\footnote{We performed our experiments 
on the HPC platform of the University of Luxembourg~\cite{VBCG_HPCS14}. 
In particular, we ran our experiments on the Gaia cluster of the UL HPC. 
A detailed description of the Gaia cluster is available 
at \url{https://hpc.uni.lu/systems/gaia/}. 
The implementations of the graph generators, 
anonymisation methods and attack simulations are available 
at \url{https://github.com/rolandotr/graph}.} 
is to demonstrate the effectiveness and usability of $k$-symmetry, 
enforced using the \textsc{K-Match} algorithm, for protecting graphs 
against active adversaries leveraging a sufficiently large number 
of sybil nodes and the strongest attack strategy reported in the literature, 
namely the robust active attack introduced in~\cite{MRT2018robsyb}. 
Effectiveness is assessed in terms of the success rate measure 
used in previous works on active 
attacks~\cite{MauwTrujilloXuan2016,MauwRamirezTrujillo2016,MRT2018,MRT2018robsyb}, 
whereas usability is assessed in terms of several structural utility measures. 
In what follows, 
we describe the experimental setting, display the empirical results obtained 
and conclude the section with a discussion of these results. 

\subsection{Experimental setting}\label{subsec-experimetal-setting}  

In order to make the results reported in this section comparable 
to previous works on active attacks and countermeasures 
against them~\cite{MRT2018,MRT2018robsyb}, 
we study the behaviour of our proposed method on two types 
of randomly generated synthetic graphs. 
In the first case, we use Erd\H{o}s-R\'enyi (ER) random graphs \cite{Erdos1959}. 
We generated $200,000$ ER graphs, $10,000$ for each density value 
in the set $\{0.1, 0.15, \ldots, 0.95, 1.0\}$. 
The second group of synthetic graphs was generated according to 
the Barab\'{a}si-Albert (BA) model~\cite{Barabasi1999}, 
which generates scale-free graphs. 
We used seed graphs of order $50$ and every graph was grown 
by adding $150$ vertices, and performing the corresponding edge additions. 
The BA model has a parameter $m$ defining the number of new edges added 
for every new vertex. We generated $10,000$ graphs for every value of $m$ 
in the set $\{5, 10, \ldots, 50\}$. In generating each graph, 
the type of the seed graph was randomly selected among the following choices: 
a complete graph, an $m$-regular ring lattice, or an ER random graph 
of density~$0.5$. The probability of selecting each choice  
was set as $\sfrac{1}{3}$. 
In both cases, the generated synthetic graphs have $200$ nodes. 
Based on the discussion on the plausible number 
of sybil nodes in Section~\ref{sec-adversarial-model}, 
we make the number of sybils $\ell=\lceil\log_2 200\rceil=8$. 

We analyse three values for the privacy parameter $k$: 
a low value, $k=2$; a high value, $k=\ell=8$; and an intermediate value,
$k=5$. For every value of $k$, we compare 
the behaviour of the \mbox{\textsc{K-Match}} algorithm, which ensures $k$-symmetry, 
and Mauw et al.'s algorithm for enforcing \mbox{$(k,\Gamma_{G,1})$-adjacency} 
anonymity~\cite{MRT2018}. In order to build the vertex alignment table, 
algorithm \textsc{K-Match} requires the vertex set 
of the input graph to be partitioned into $k$ subsets such that 
the number of edges linking vertices in different subsets is close to the minimum. 
We used the multilevel $k$-way partitioning method reported in~\cite{Karypis1998}, 
in specific its implementation included in the METIS 
library~\footnote{Available at \url{http://glaros.dtc.umn.edu/gkhome/views/metis}.}, 
for efficiently obtaining such a partition. 
The effectiveness of the anonymisation methods is measured in terms 
of their resistance to the robust active attack described 
in~\cite{MRT2018robsyb}. Thus, following the attacker-defender game 
described in Section~\ref{sec-adversarial-model}, for every graph 
we first run the attacker subgraph 
creation stage. Then, for every resulting graph, 
we obtain the two variants of anonymised graphs. Finally, for each perturbed graph, 
we simulate the execution of the re-identification stage and compute 
its success rate as defined in~\cite{MRT2018robsyb}, that is  

\begin{equation}\label{eq-succ-rate-robust}
SuccessRate=\left\{\begin{array}{ll}
\frac{\sum_{X\in \mathcal{X}} 
p_{X}}{|\mathcal{X}|}
&\quad\text{if }\mathcal{X}\neq\emptyset\\
0&\quad\text{otherwise}
\end{array}
\right. 
\end{equation}

\noindent
where $\mathcal{X}$ is the set of equally-most-likely sybil subgraphs 
retrieved in $\transfg$ by the third phase of the attack, and 

\begin{displaymath}
p_{X} =\left\{
\begin{array}{ll}
\frac{1}{|\mathcal{Y}_{X}|} \quad & \quad \text{if} 
\quad Y \in \mathcal{Y}_{X}\\
                  0 \quad & \quad \text{otherwise}
\end{array}
\right.
\end{displaymath}

\noindent
with $\mathcal{Y}_{X}$ containing all equally-most-likely 
fingerprint matchings according to $X$. 
In order to obtain the scores used for comparing the effectiveness 
of the different approaches, we compute for every method 
the average of the success rates over every group of $10,000$ graphs sharing 
the same set of parameter choices. 

The anonymisation methods are also compared in terms 
of utility. To that end, we measure the distortion caused 
by each method on a number of global graph statistics, namely the global 
clustering coefficient, the averaged local clustering coefficient, 
and the similarity between the degree distributions, 
measured in terms of the cosine of the angle between the degree vectors, 
following the approach introduced in~\cite{SecGraph,MauwRamirezTrujillo2016}. 

\subsection{Results and discussion}\label{subsec-discussion}

Figure~\ref{fig-succ-prob-er-ba-200-8-syb-at-1} shows the success rates 
of the attack on both random graph collections, whereas 
Figures~\ref{fig-dd-er-ba-200-8-syb-at-1}, 
\ref{fig-gcc-er-ba-200-8-syb-at-1} and~\ref{fig-avg-lcc-er-ba-200-8-syb-at-1} 
show utility values in terms of degree distribution similarity, 
variation of global clustering coefficient 
and variation of averaged local clustering coefficient, respectively. 

Regarding the effectiveness of the anonymisation methods, 
the results in Figure~\ref{fig-succ-prob-er-ba-200-8-syb-at-1} clearly show 
that \textsc{K-Match} is considerably more effective 
against the robust active attack than $(k,\Gamma_{G,1})$-adjacency anonymity. 
These results are particularly relevant in light of the fact 
that $(k,\Gamma_{G,1})$-adjacency anonymity was until now 
the sole formal privacy property to provide non-negligible protection 
against the original active attack and some instances 
of the robust active attack~\cite{MRT2018,MRT2018robsyb}. 
Finally, the experiments shown here are the first ones 
where the robust active attack leveraging $\lceil \log_2 n \rceil$ sybils 
is shown to be consistently thwarted by anonymisation methods 
based on formal privacy properties. So far, this had only been achieved 
via the addition of (large amounts of) random noise~\cite{MRT2018robsyb}. 

Regarding utility, both methods have a small impact on the overall similarities 
of the degree distributions, 
as illustrated in Figure~\ref{fig-dd-er-ba-200-8-syb-at-1}. 
This does not mean that the degrees are not affected 
by the methods. In fact, both methods make most degrees increase, but in a manner 
that does not significantly affect the ordering of vertices in terms 
of their degrees. Regarding clustering coefficient-based utilities, we can observe 
in Figures~\ref{fig-gcc-er-ba-200-8-syb-at-1} 
and~\ref{fig-avg-lcc-er-ba-200-8-syb-at-1} that the superior 
effectiveness of \textsc{K-Match} does come at the price of a larger degradation 
in the values of local and global clustering coefficients. 

The main takeaway from the experimental results presented in this section 
is that our refinement of the notion of re-identification probability 
for active adversaries has led to identifying, for the first time, 
an anonymisation method satisfying two key properties: 
(i) featuring a theoretically sound 
privacy guarantee against active attackers, 
and (ii) having this privacy guarantee translate 
into effective resistance to the strongest active attack reported so far, 
even when the attacker leverages a large number of sybil nodes.

\section{Conclusions}\label{sec-concl}

We have introduced a new probabilistic interpretation 
of active re-identification attacks on social graphs. This enables 
the privacy-preserving publication of social graphs in the presence 
of active adversaries by jointly preventing 
the attacker from unambiguously retrieving the set of sybil nodes, 
and from using the sybil nodes for re-identifying the victims. 
Under the new formulation, we have shown that the privacy property $k$-symmetry
provides a sufficient condition for the protection 
against active re-identification attacks. Moreover, we have shown  
that a previously existing efficient algorithm, \mbox{\textsc{K-Match}}, 
provides a sufficient condition for ensuring $k$-symmetry. 
Through a series of experiments, we have demonstrated that our approach allows, 
for the first time, to publish anonymised social graphs 
with formal privacy guarantees that effectively resist the robust active attack 
introduced in~\cite{MRT2018robsyb}, which is the strongest active re-identification 
attack reported in the literature, even when it leverages 
a large number of sybil nodes. 

The active adversary model addressed in this paper assumes 
that the (inherently dynamic) social graph is published only once. 
A more general scenario, where snapshots of a dynamic social network 
are periodically published in the presence of active adversaries, 
has recently been proposed in~\cite{CKMR2020}, and the robust active attack 
from~\cite{MRT2018robsyb} has been adapted to benefit from this scenario. 
Our main direction for future work consists in leveraging our methodology 
to propose anonymisation methods suited for this new publication scenario.  

\vspace{2mm}
\noindent 
\textbf{Acknowledgements:} The work reported in this paper
received funding from Luxembourg's Fonds National de la Recherche (FNR),
via grant C17/IS/11685812 (\mbox{PrivDA}). Part of this work was conducted 
while Yunior Ram\'irez-Cruz was visiting the School of Information Technology 
at Deakin University.  

\begin{figure}[H]
\centering
\subfigure[t][ER graphs, $k=2$]{
\includegraphics[scale=.54]{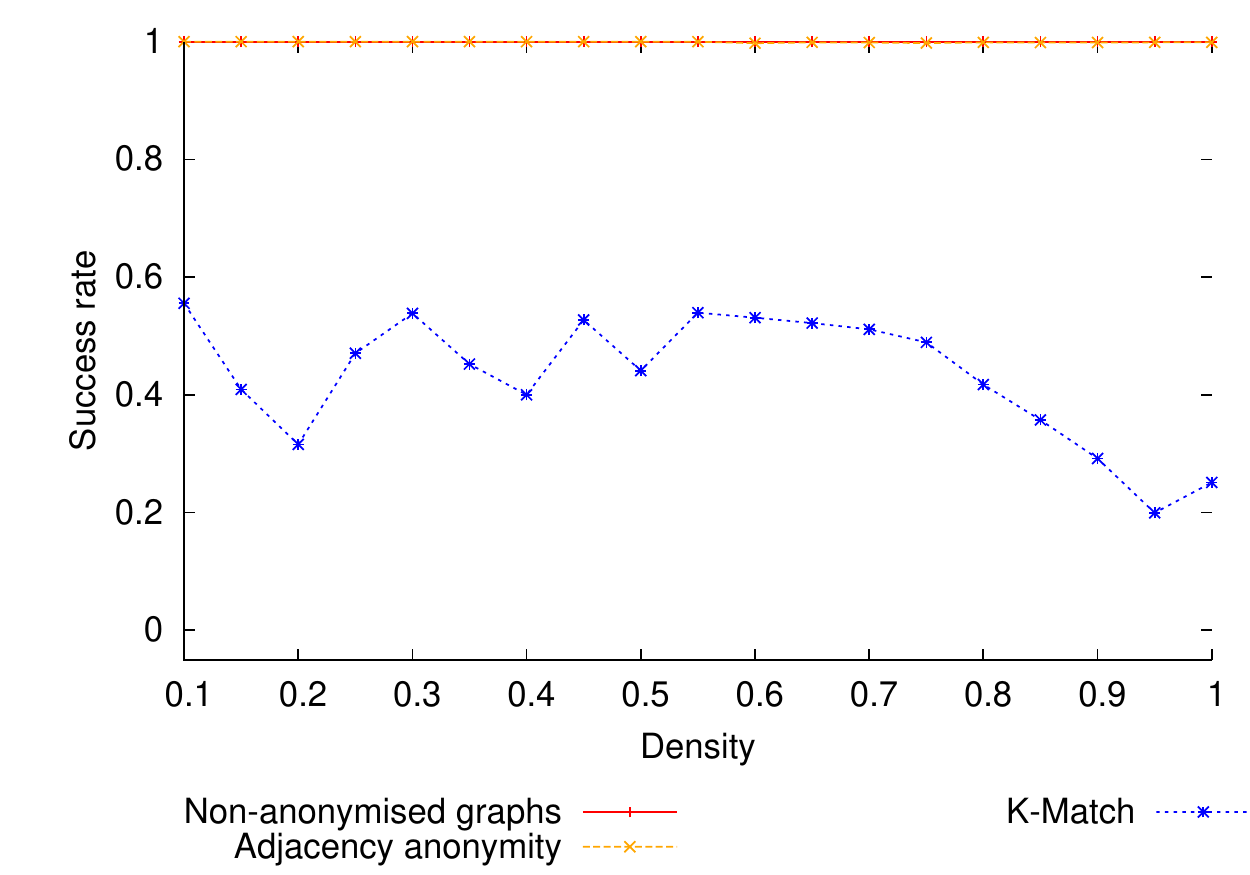}
}
\subfigure[t][BA graphs, $k=2$]{
\includegraphics[scale=.54]{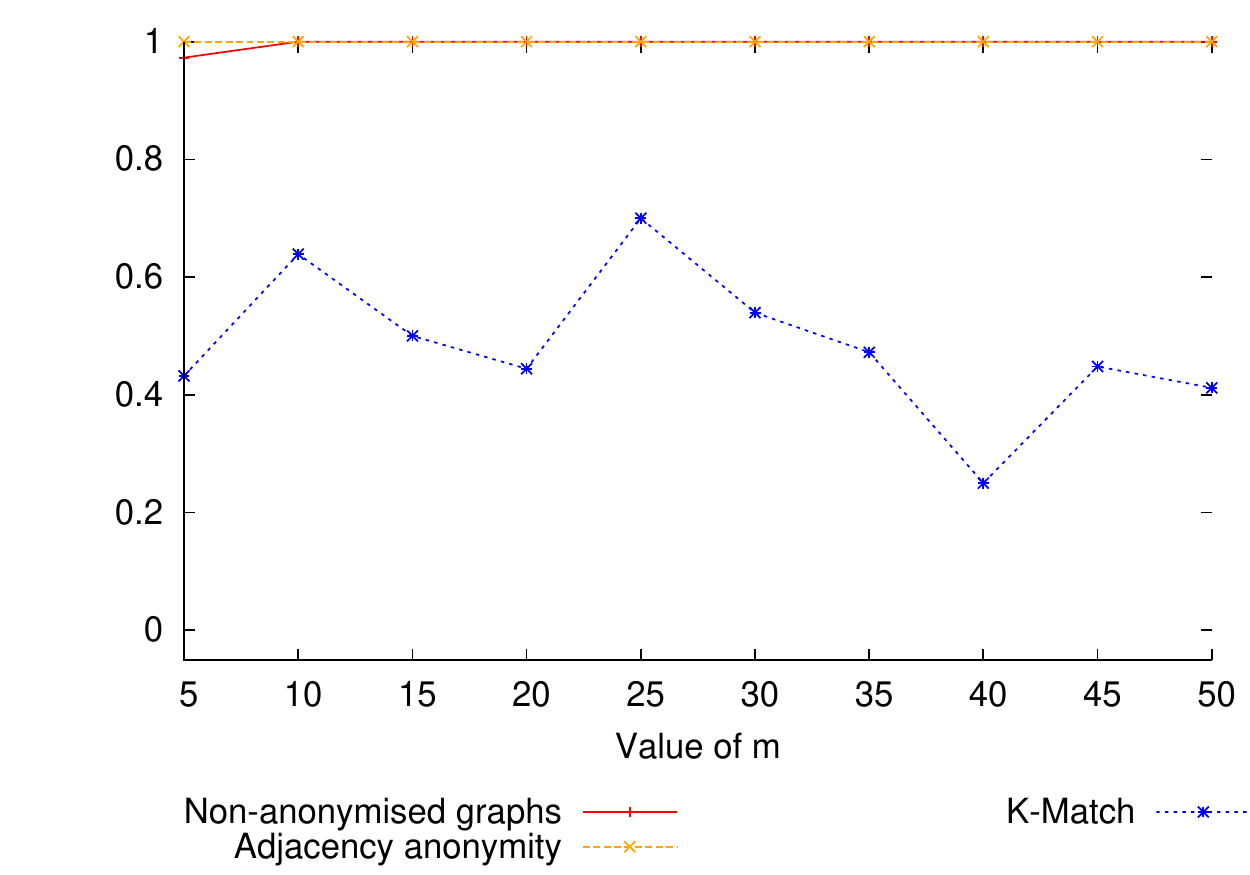}
}
\subfigure[t][ER graphs, $k=5$]{
\includegraphics[scale=.54]{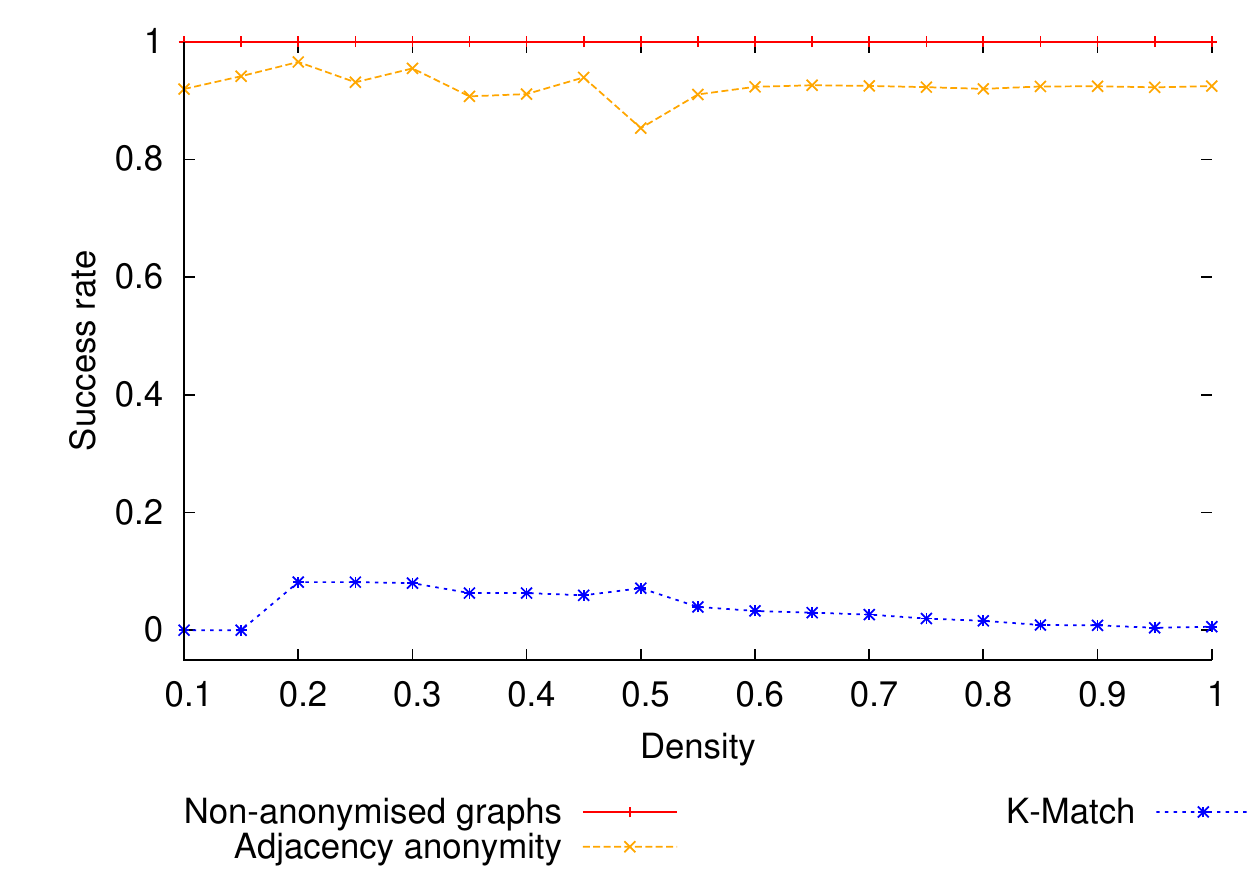}
}
\subfigure[t][BA graphs, $k=5$]{
\includegraphics[scale=.54]{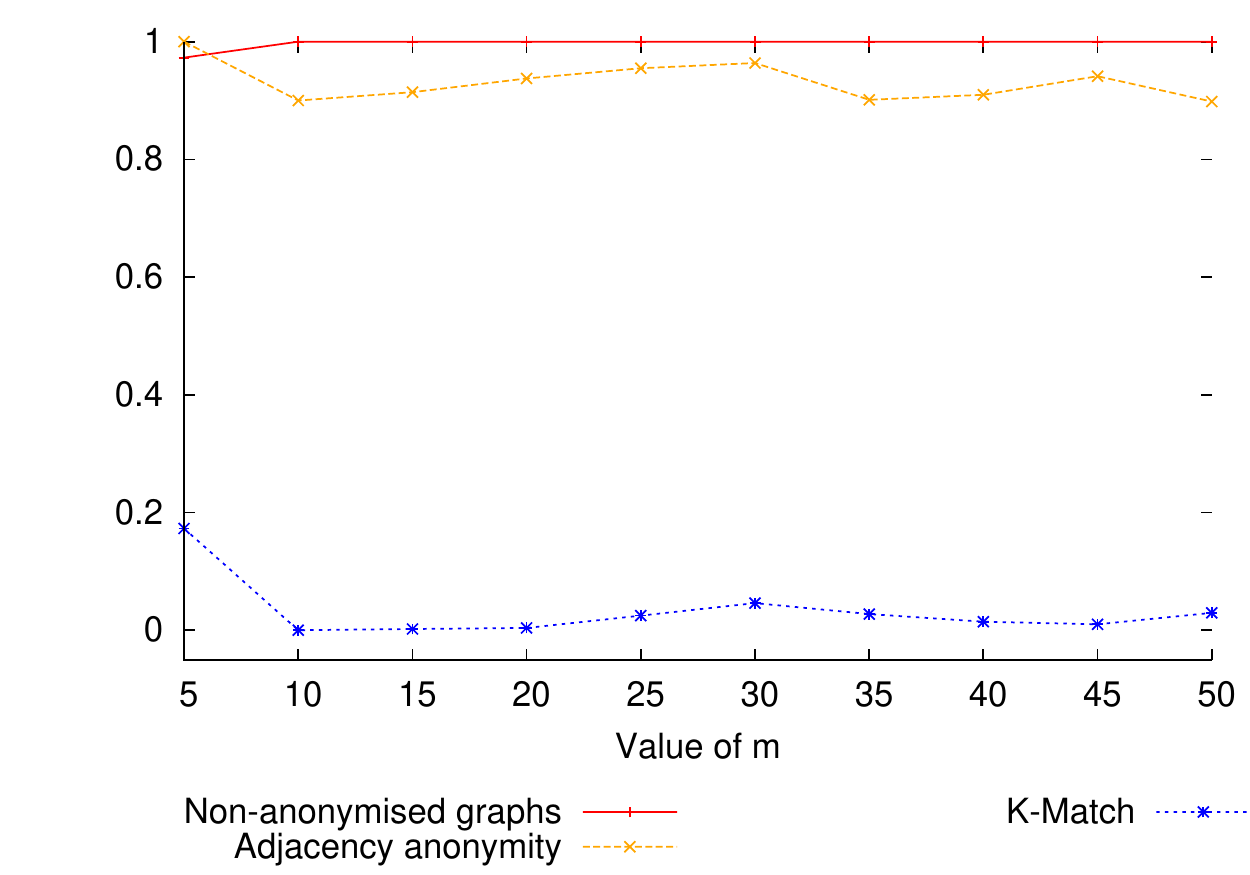}
}
\subfigure[t][ER graphs, $k=8$]{
\includegraphics[scale=.54]{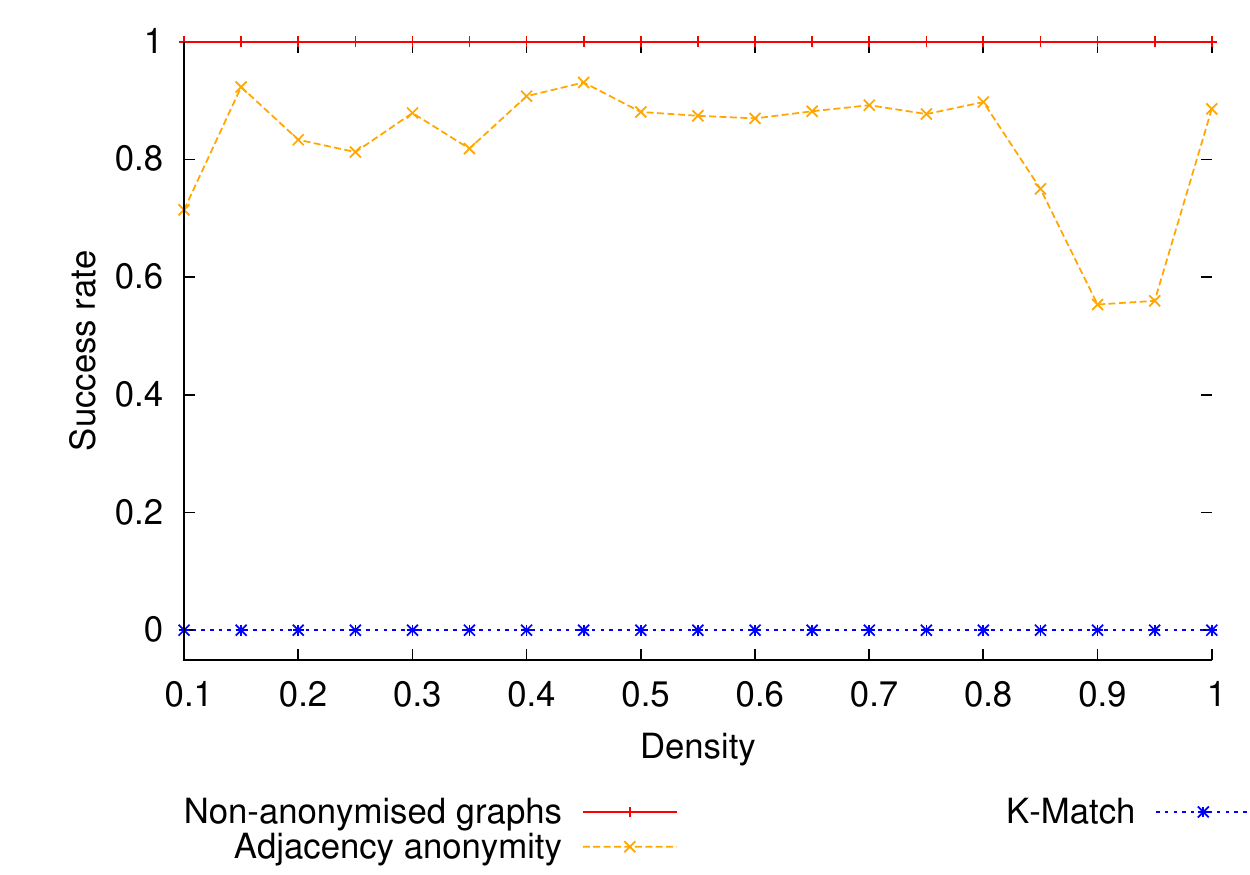}
}
\subfigure[t][BA graphs, $k=8$]{
\includegraphics[scale=.54]{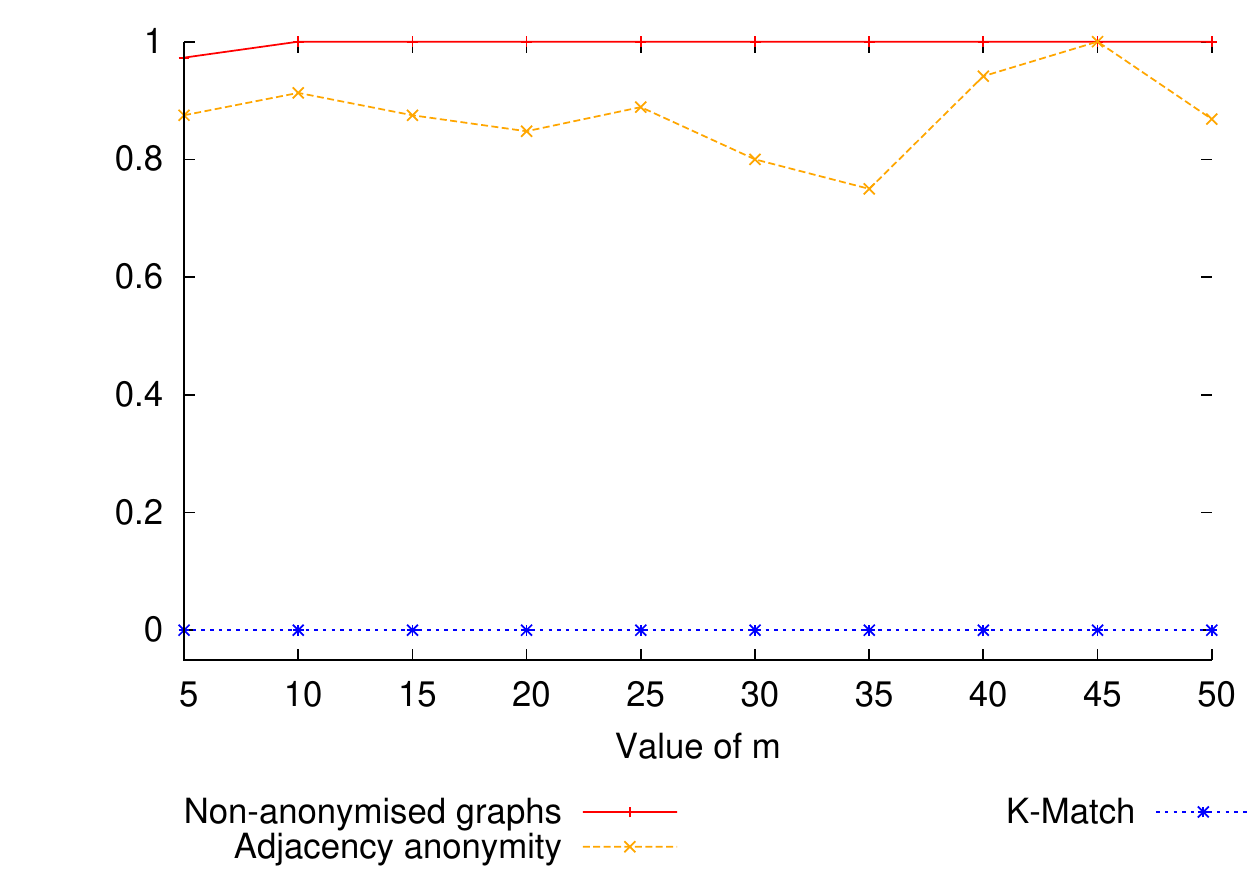}
}
\caption{Success rates of the robust active attack on the collections 
of Erd\H{o}s-R\'enyi (left) and Barab\'asi-Albert (right) random graphs, 
with $\ell=8$ and $k\in\{2,5,8\}$.} 
\label{fig-succ-prob-er-ba-200-8-syb-at-1} 
\end{figure} 



\begin{figure}[H]
\centering
\subfigure[t][ER graphs, $k=2$]{
\includegraphics[scale=.54]{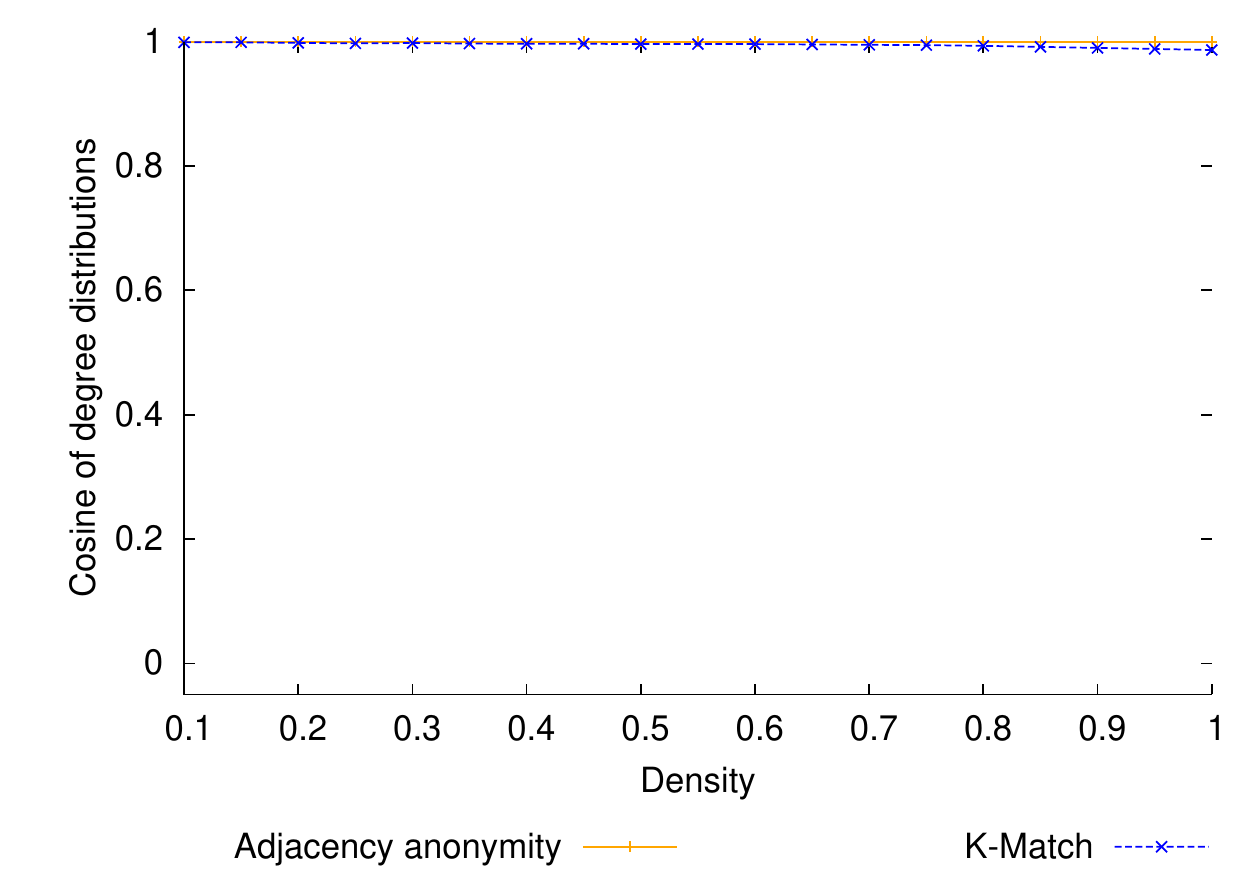}
}
\subfigure[t][BA graphs, $k=2$]{
\includegraphics[scale=.54]{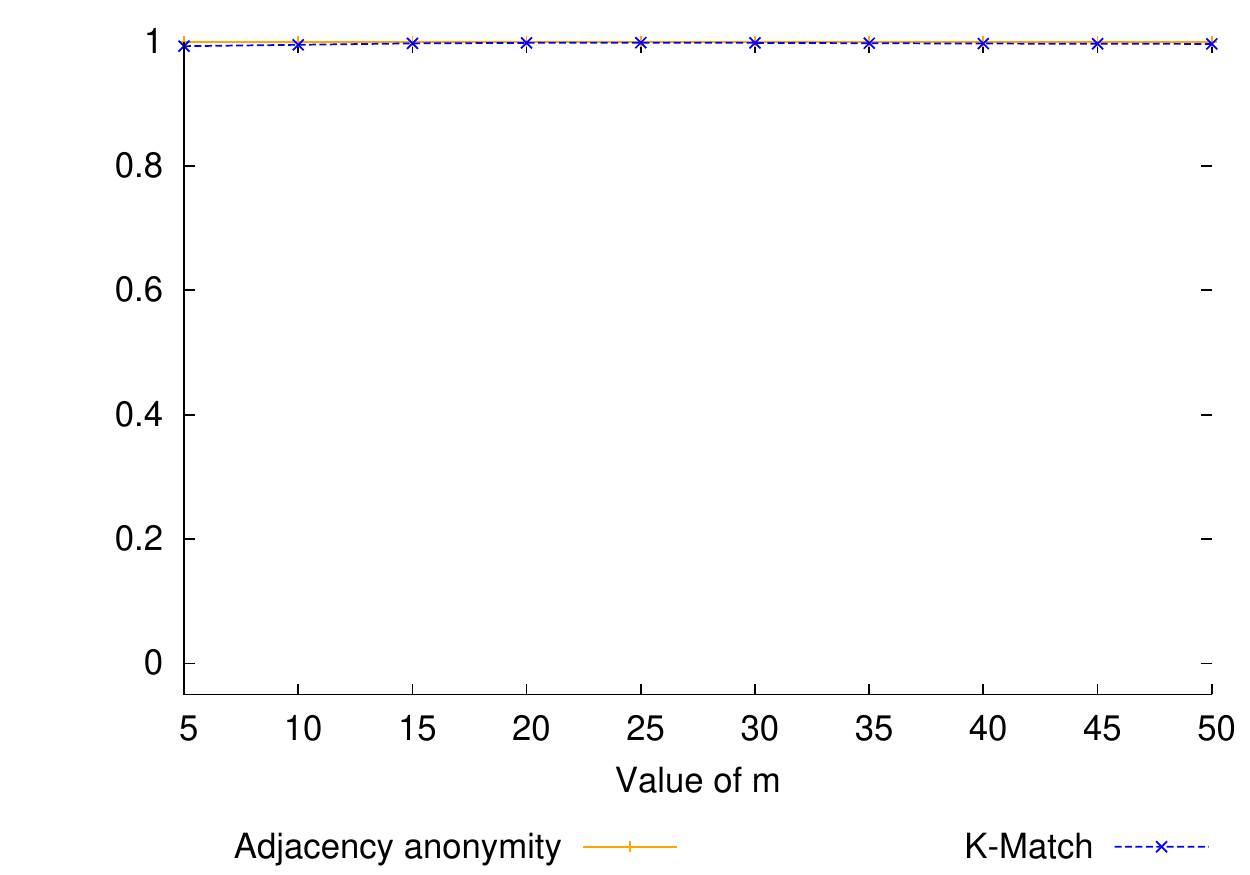}
}
\subfigure[t][ER graphs, $k=5$]{
\includegraphics[scale=.54]{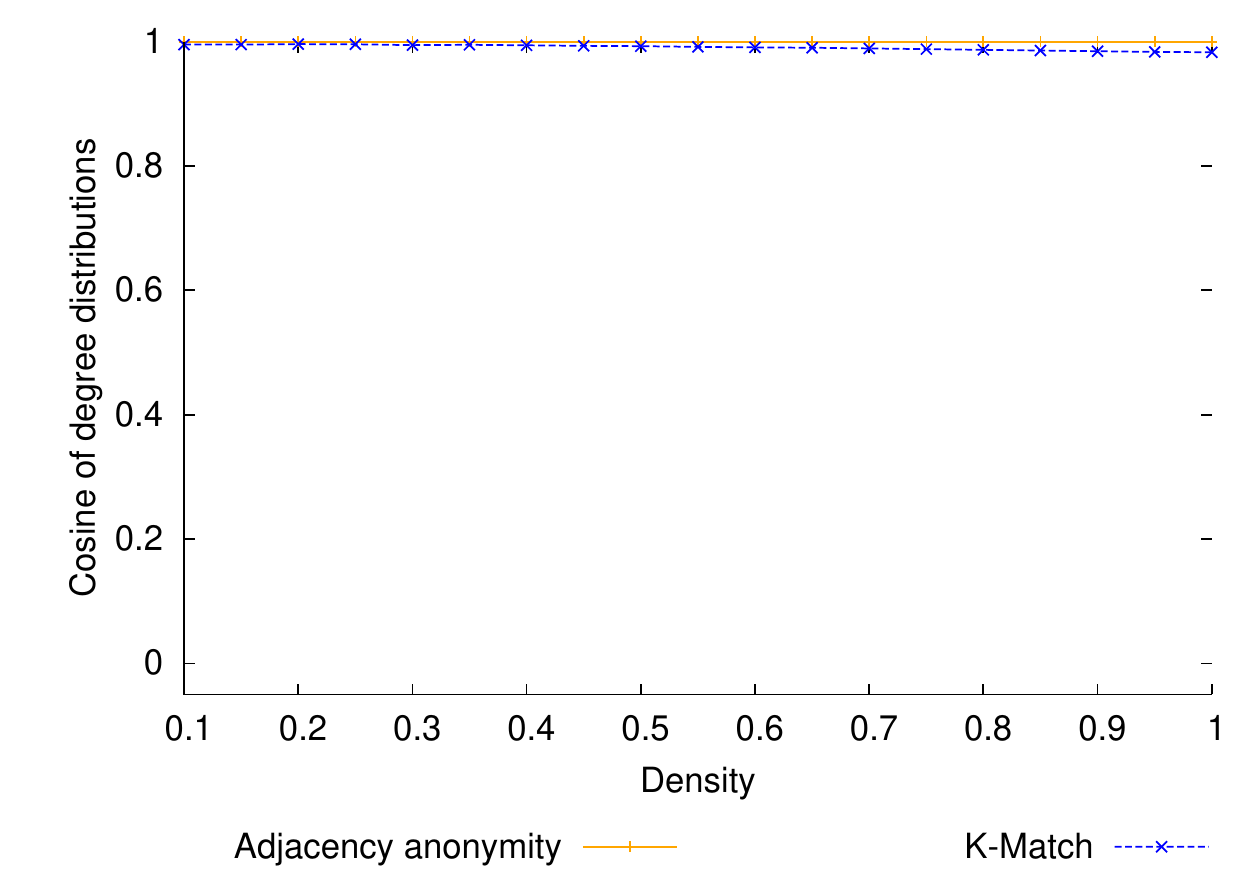}
}
\subfigure[t][BA graphs, $k=5$]{
\includegraphics[scale=.54]{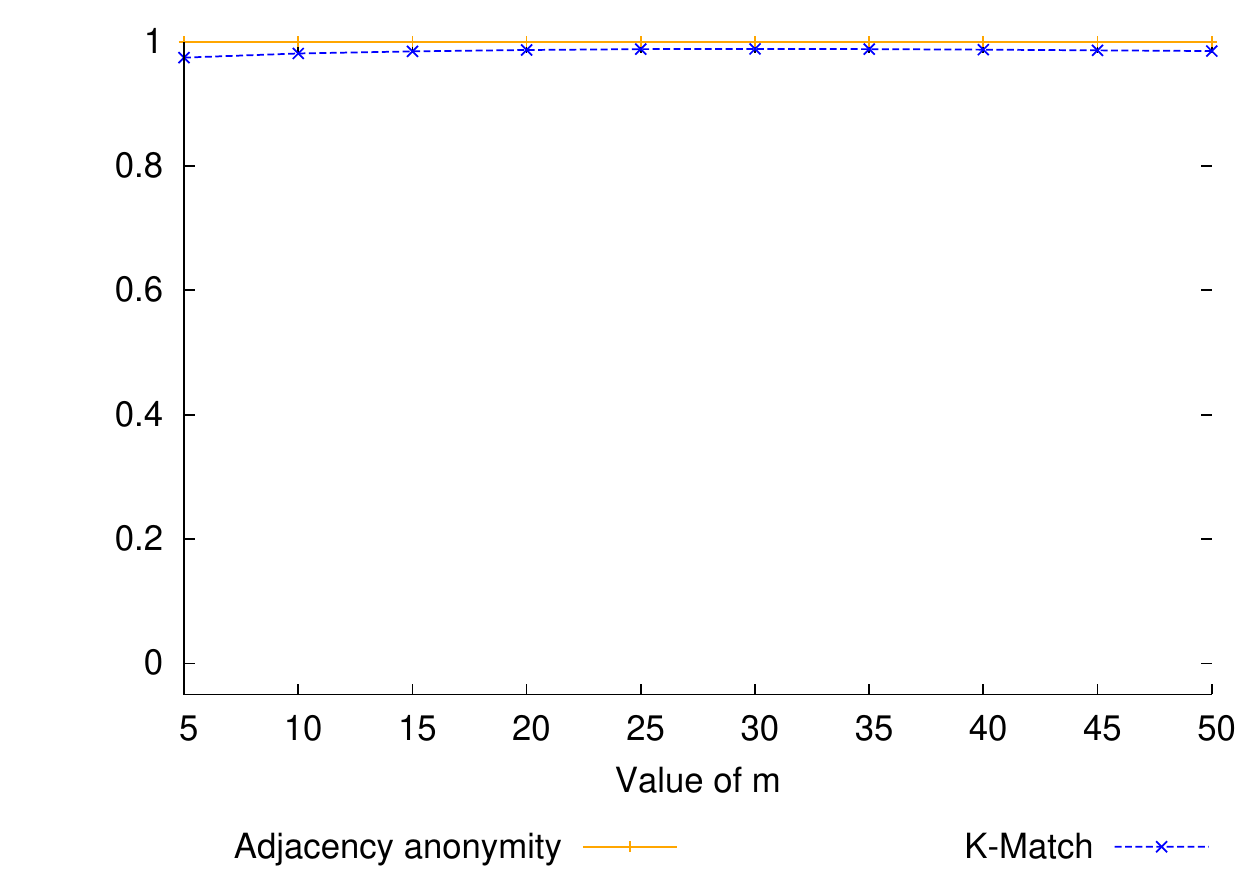}
}
\subfigure[t][ER graphs, $k=8$]{
\includegraphics[scale=.54]{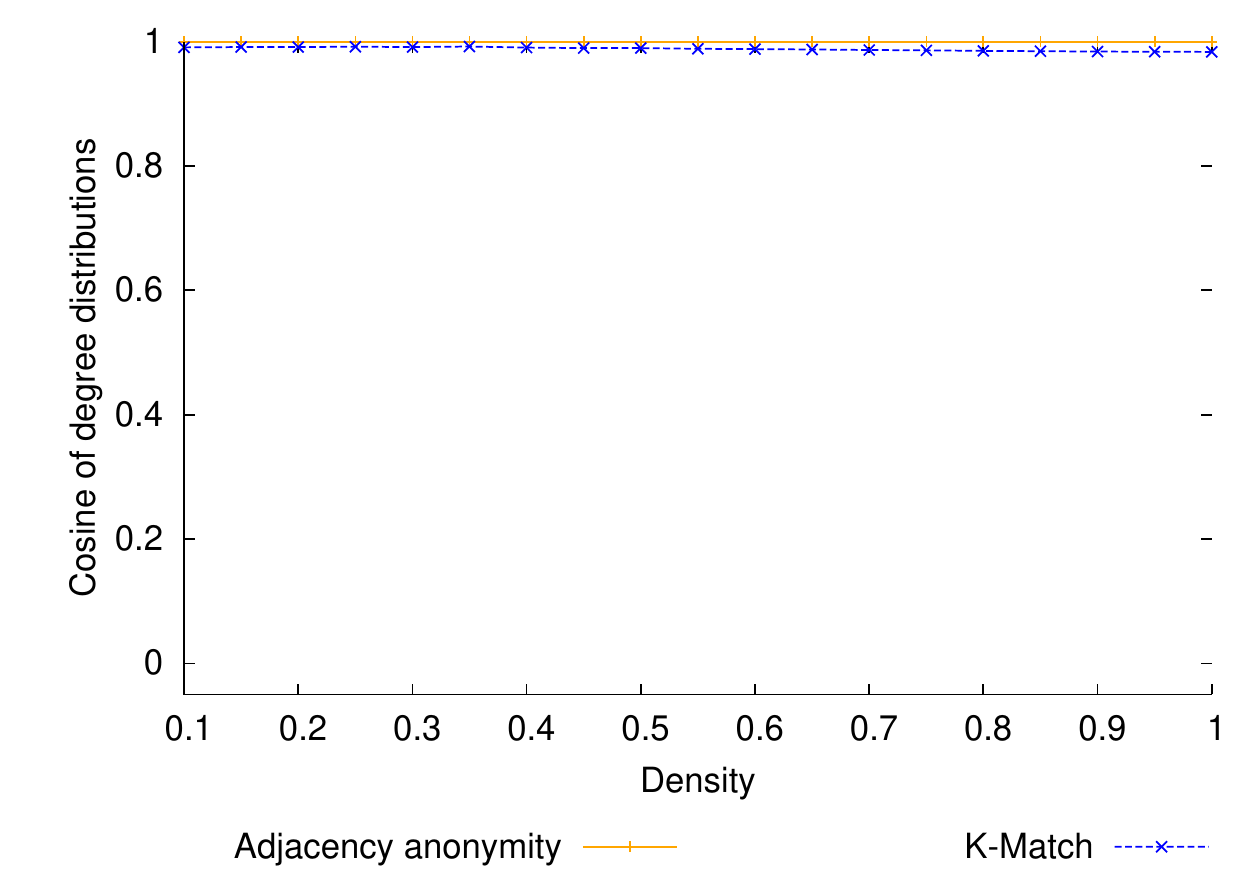}
}
\subfigure[t][BA graphs, $k=8$]{
\includegraphics[scale=.54]{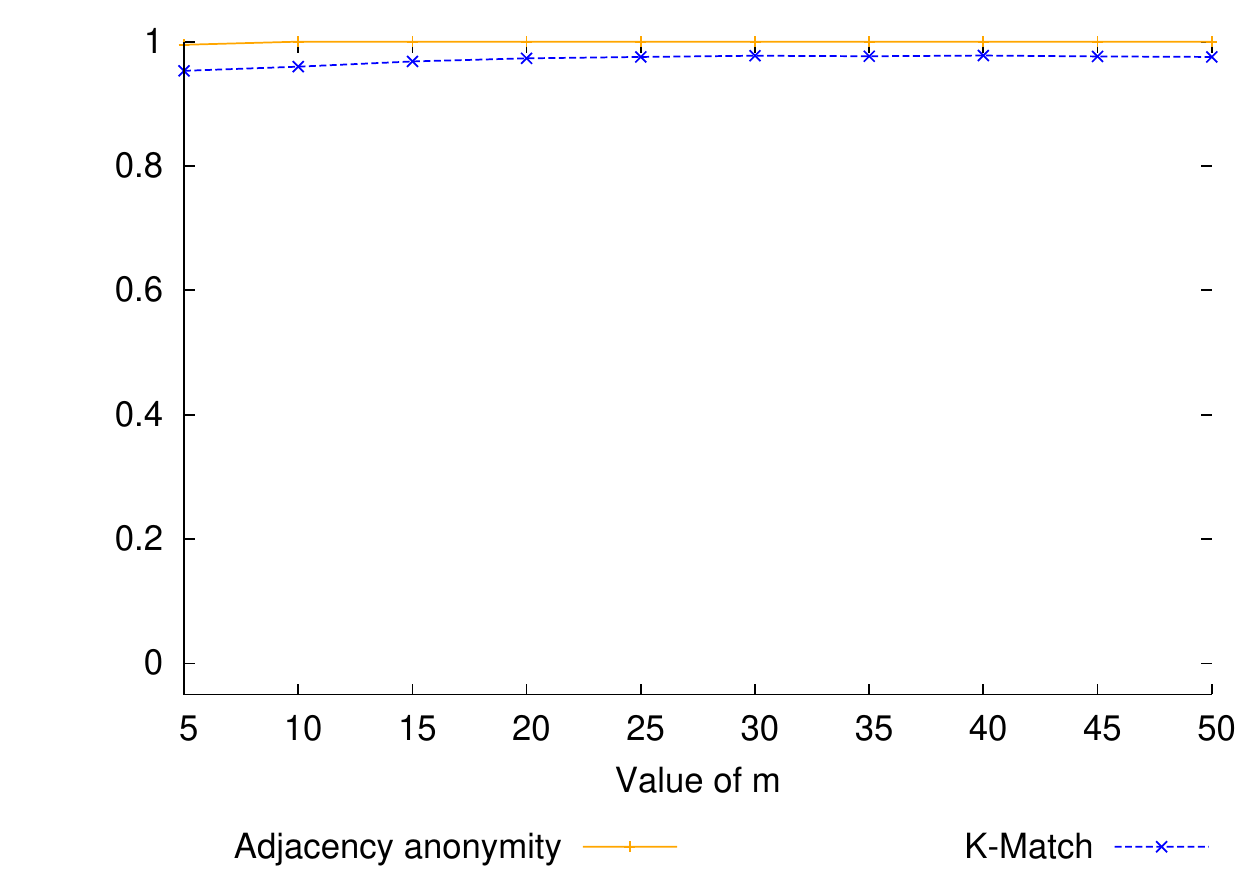}
}
\caption{Degree distribution similarities on the collections 
of Erd\H{o}s-R\'enyi (left) and Barab\'asi-Albert (right) random graphs, 
with $\ell=8$ and $k\in\{2,5,8\}$.} 
\label{fig-dd-er-ba-200-8-syb-at-1} 
\end{figure} 



\begin{figure}[H]
\centering
\subfigure[t][ER graphs, $k=2$]{
\includegraphics[scale=.54]{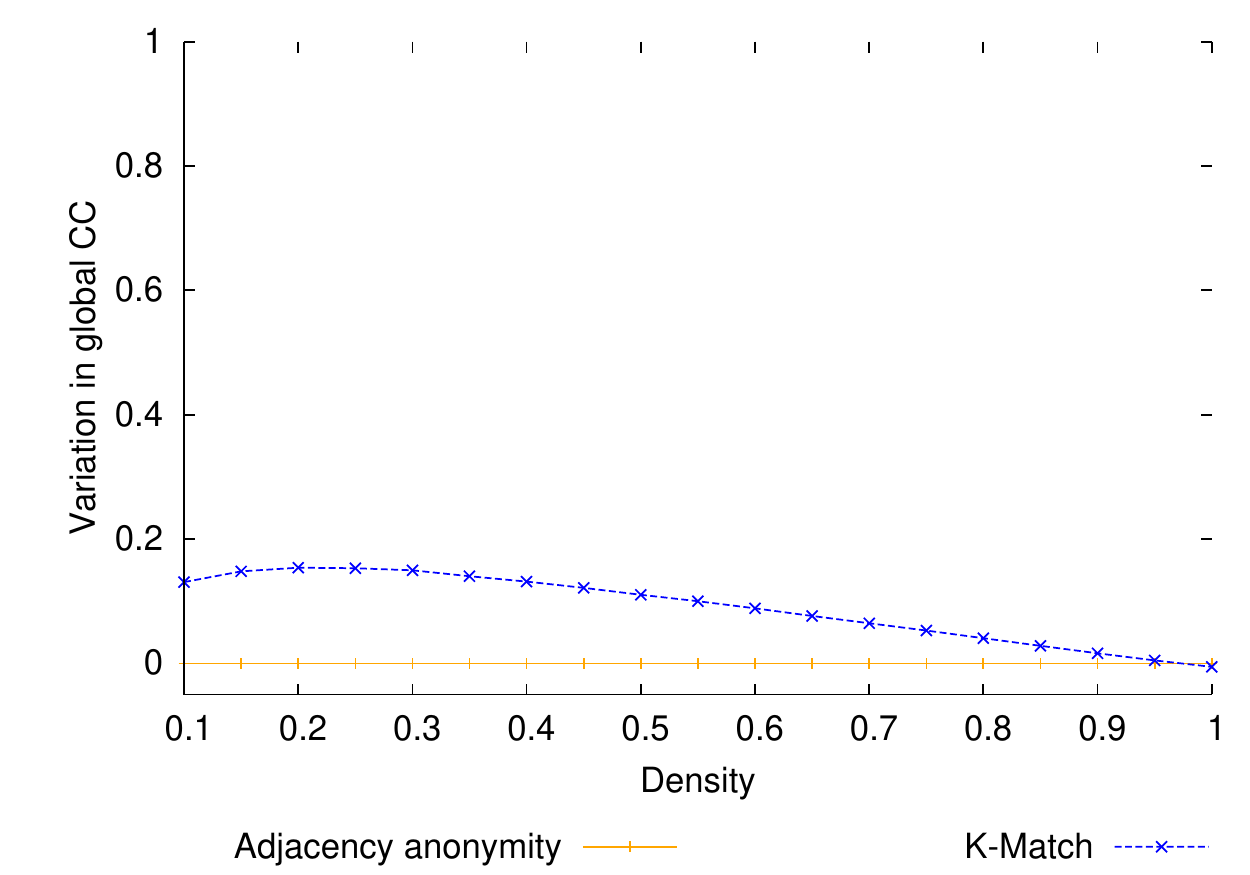}
}
\subfigure[t][BA graphs, $k=2$]{
\includegraphics[scale=.54]{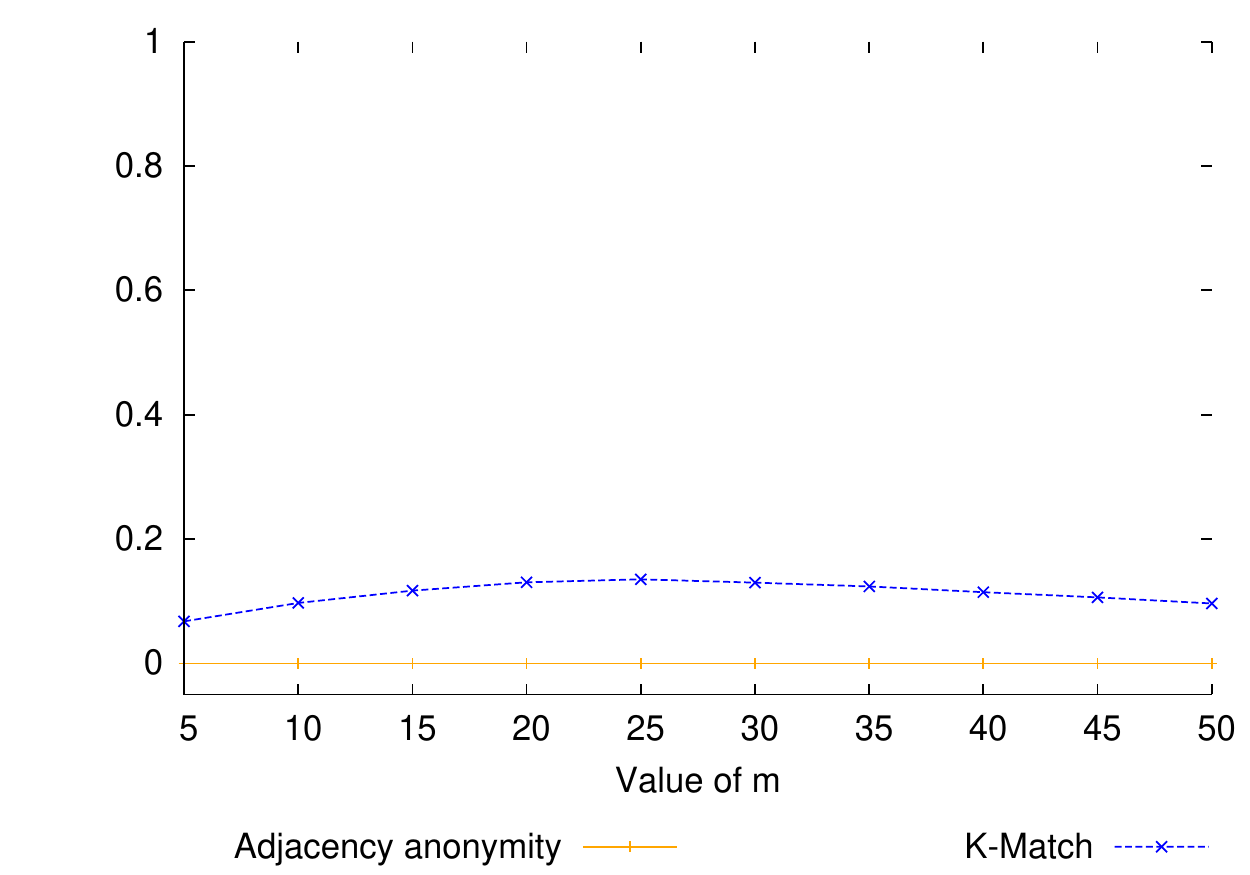}
}
\subfigure[t][ER graphs, $k=5$]{
\includegraphics[scale=.54]{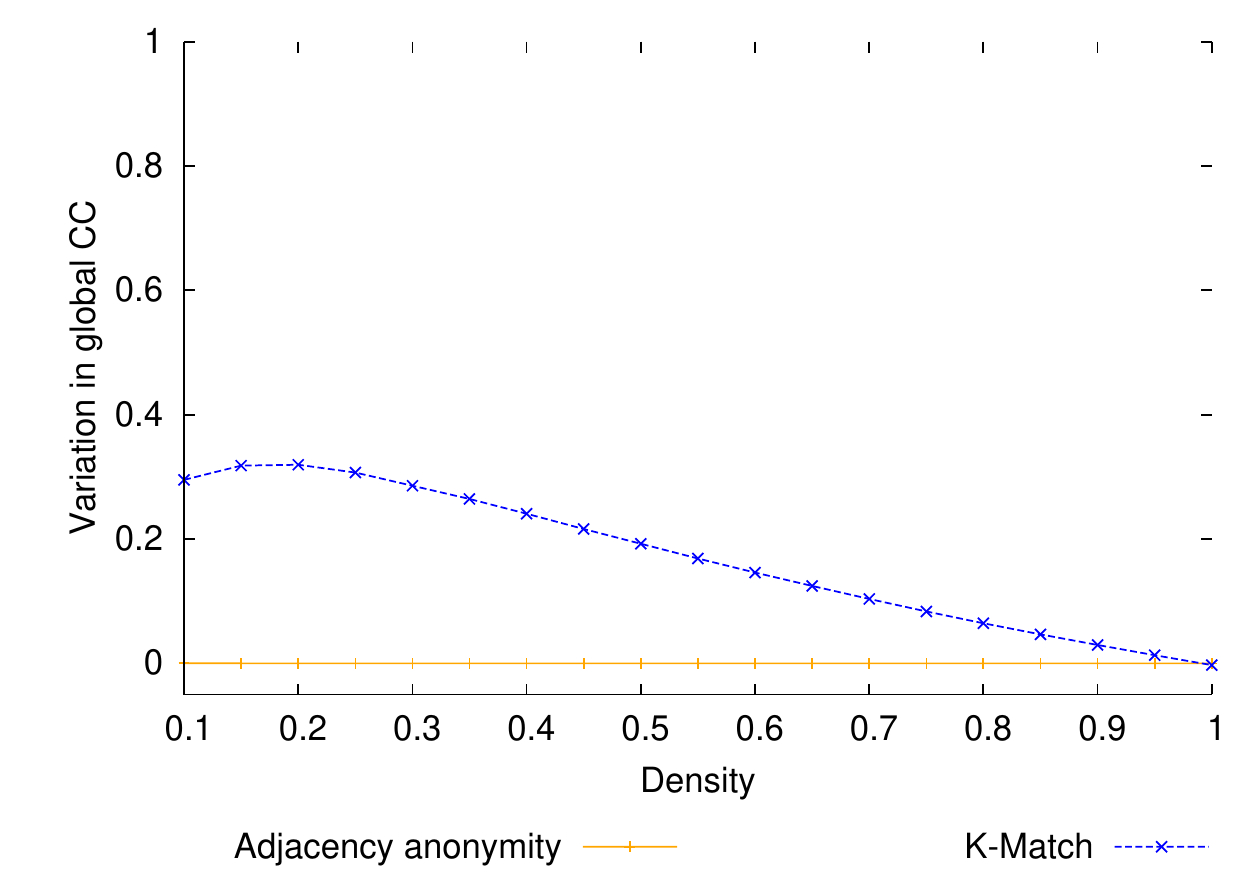}
}
\subfigure[t][BA graphs, $k=5$]{
\includegraphics[scale=.54]{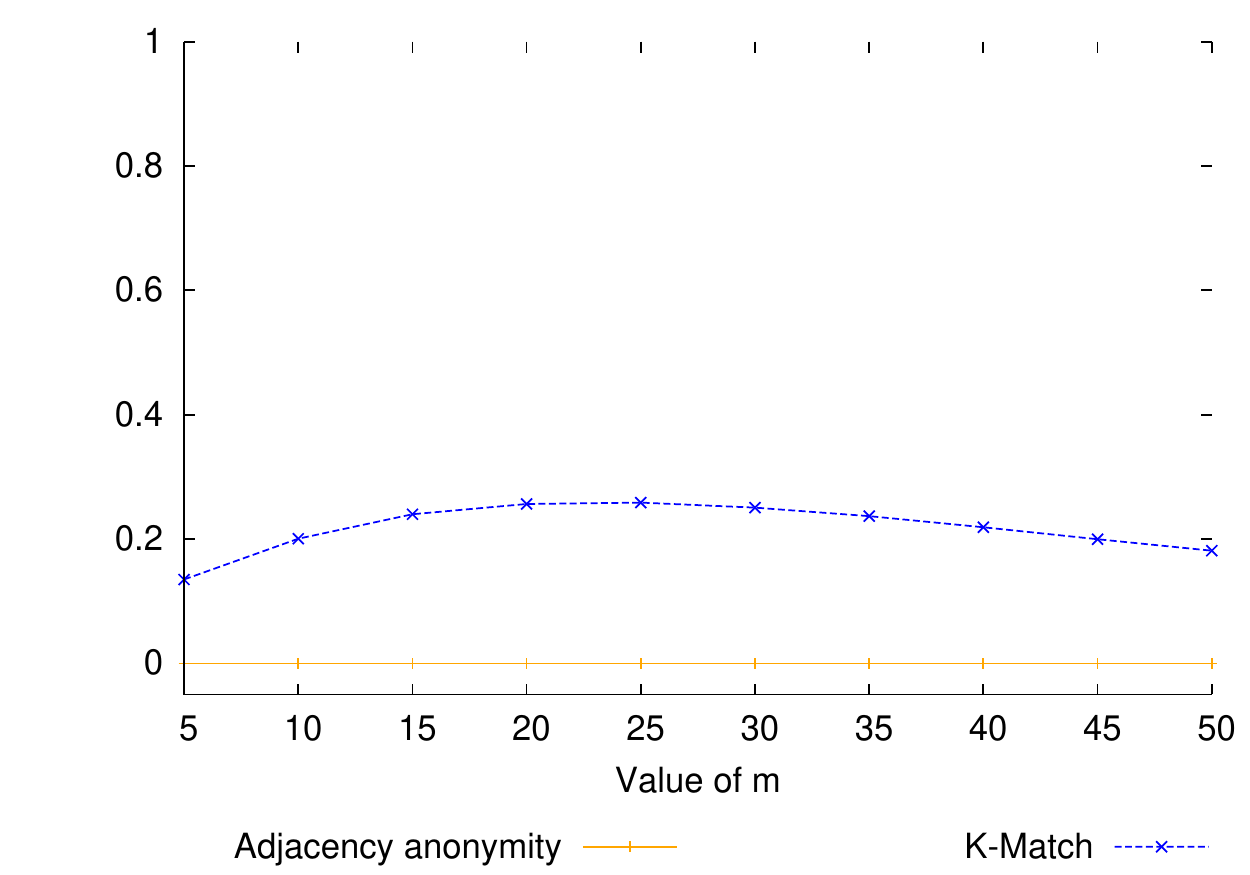}
}
\subfigure[t][ER graphs, $k=8$]{
\includegraphics[scale=.54]{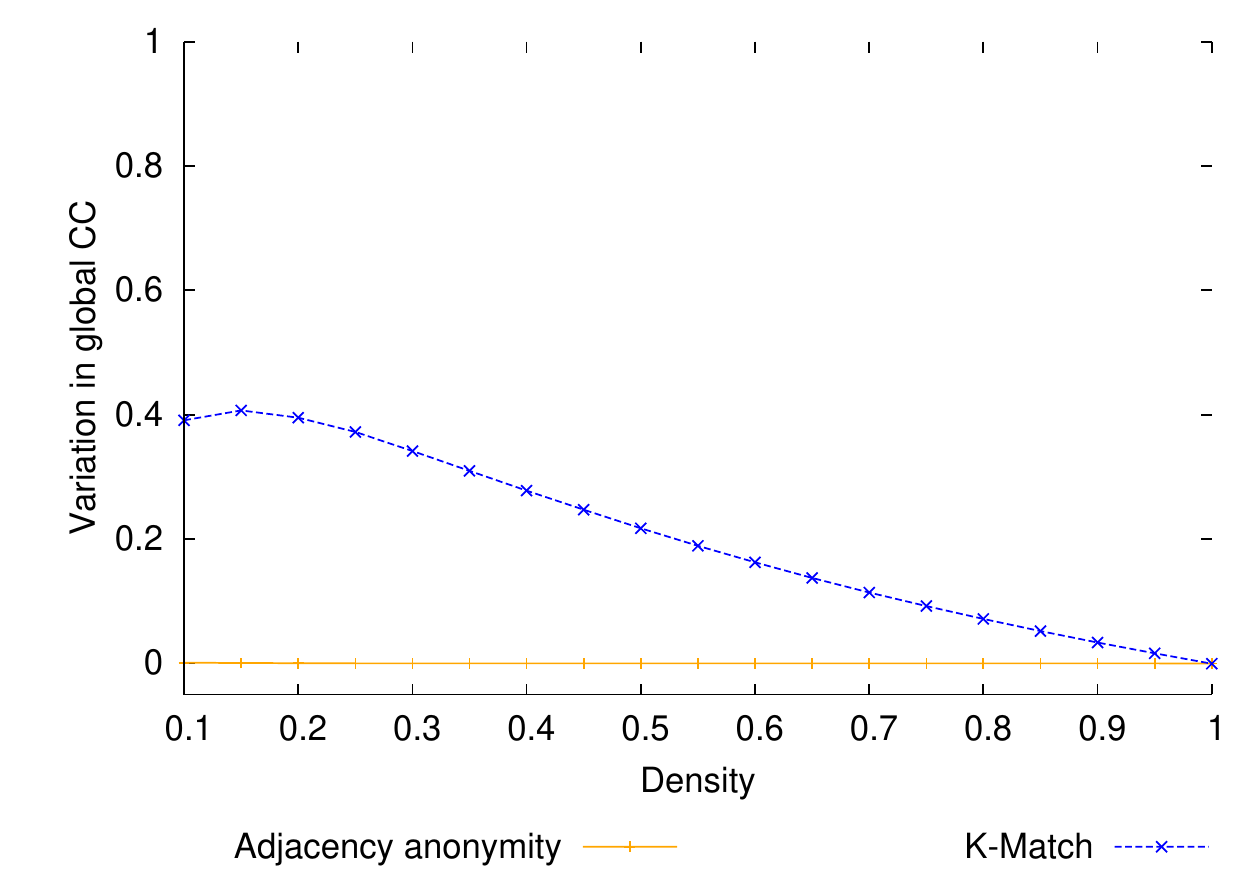}
}
\subfigure[t][BA graphs, $k=8$]{
\includegraphics[scale=.54]{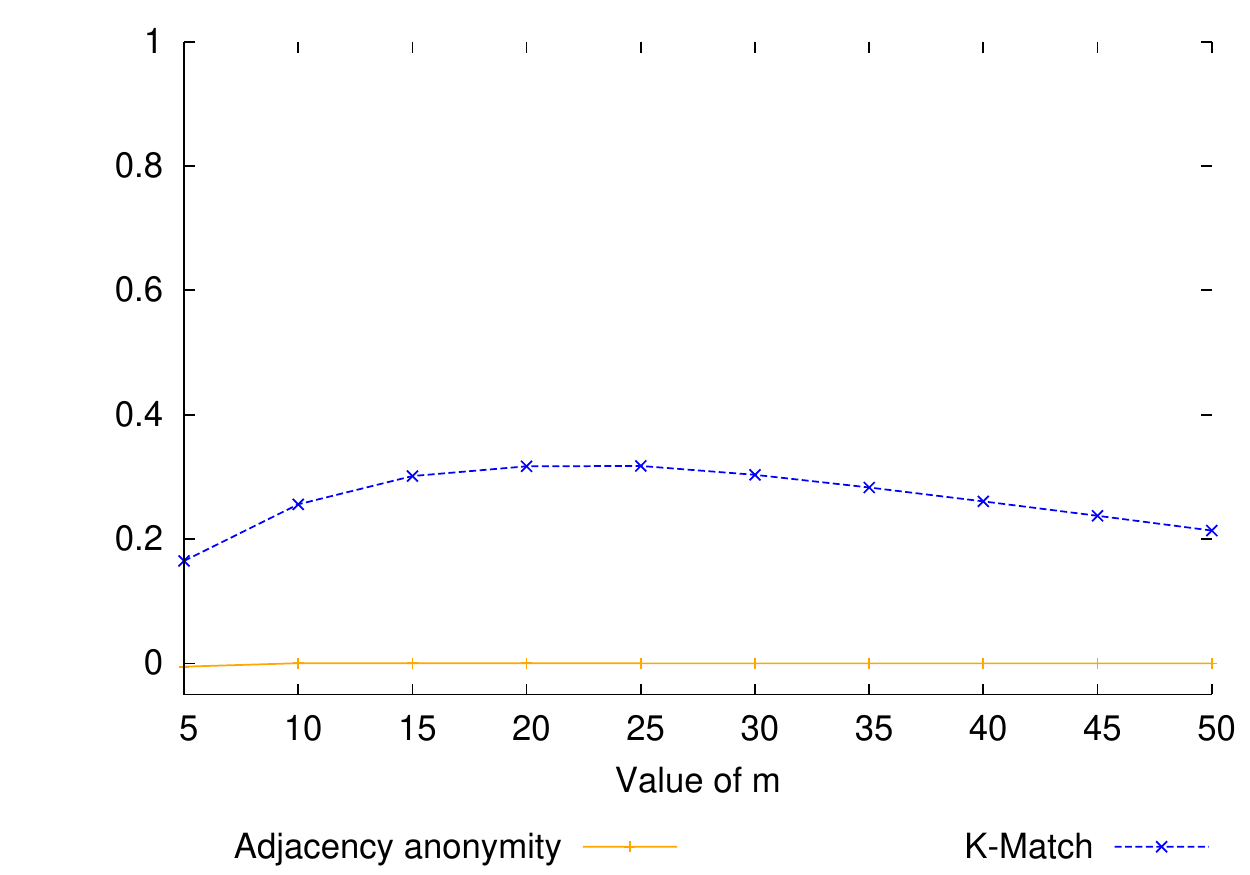}
}
\caption{Variations in global clustering coefficients on the collections 
of Erd\H{o}s-R\'enyi (left) and Barab\'asi-Albert (right) random graphs, 
with $\ell=8$ and $k\in\{2,5,8\}$.} 
\label{fig-gcc-er-ba-200-8-syb-at-1} 
\end{figure} 



\begin{figure}[H]
\centering
\subfigure[t][ER graphs, $k=2$]{
\includegraphics[scale=.54]{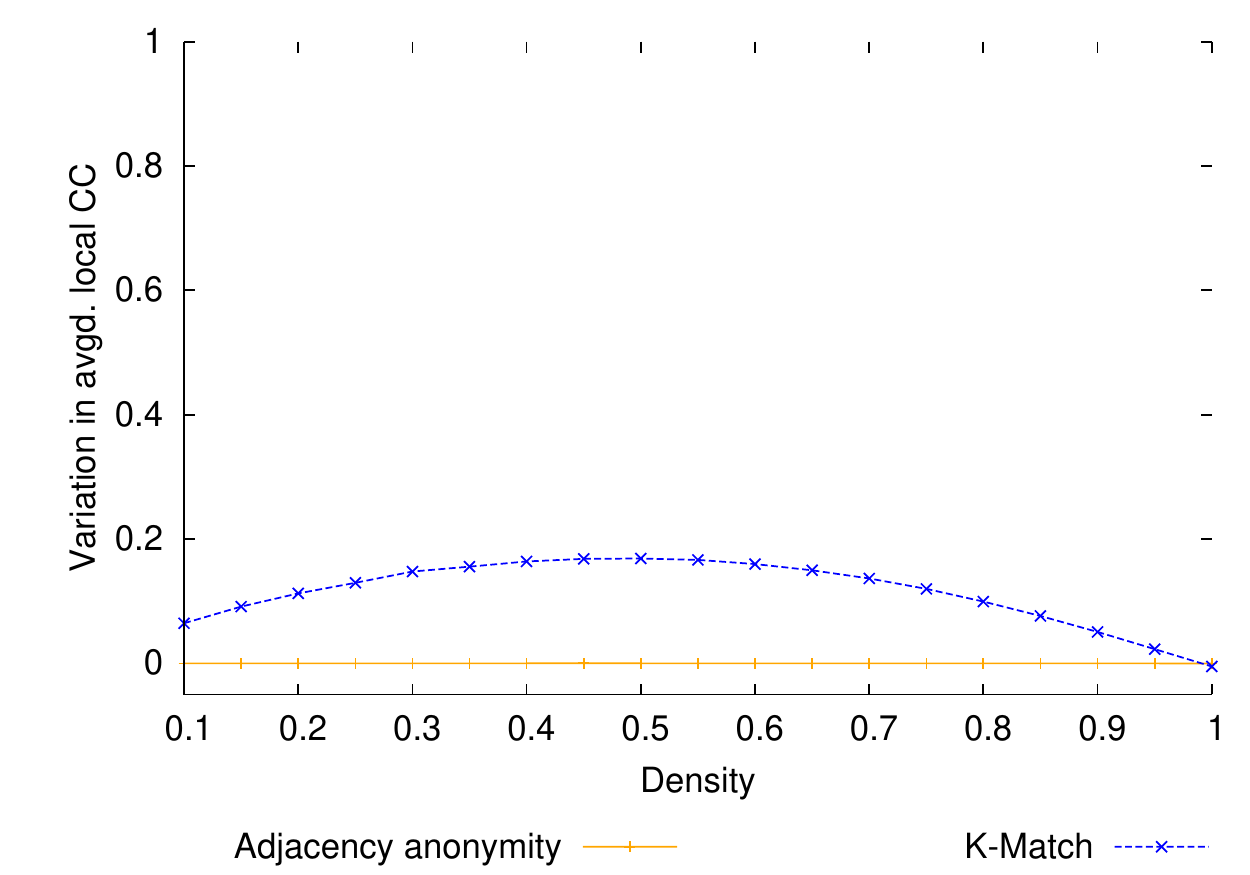}
}
\subfigure[t][BA graphs, $k=2$]{
\includegraphics[scale=.54]{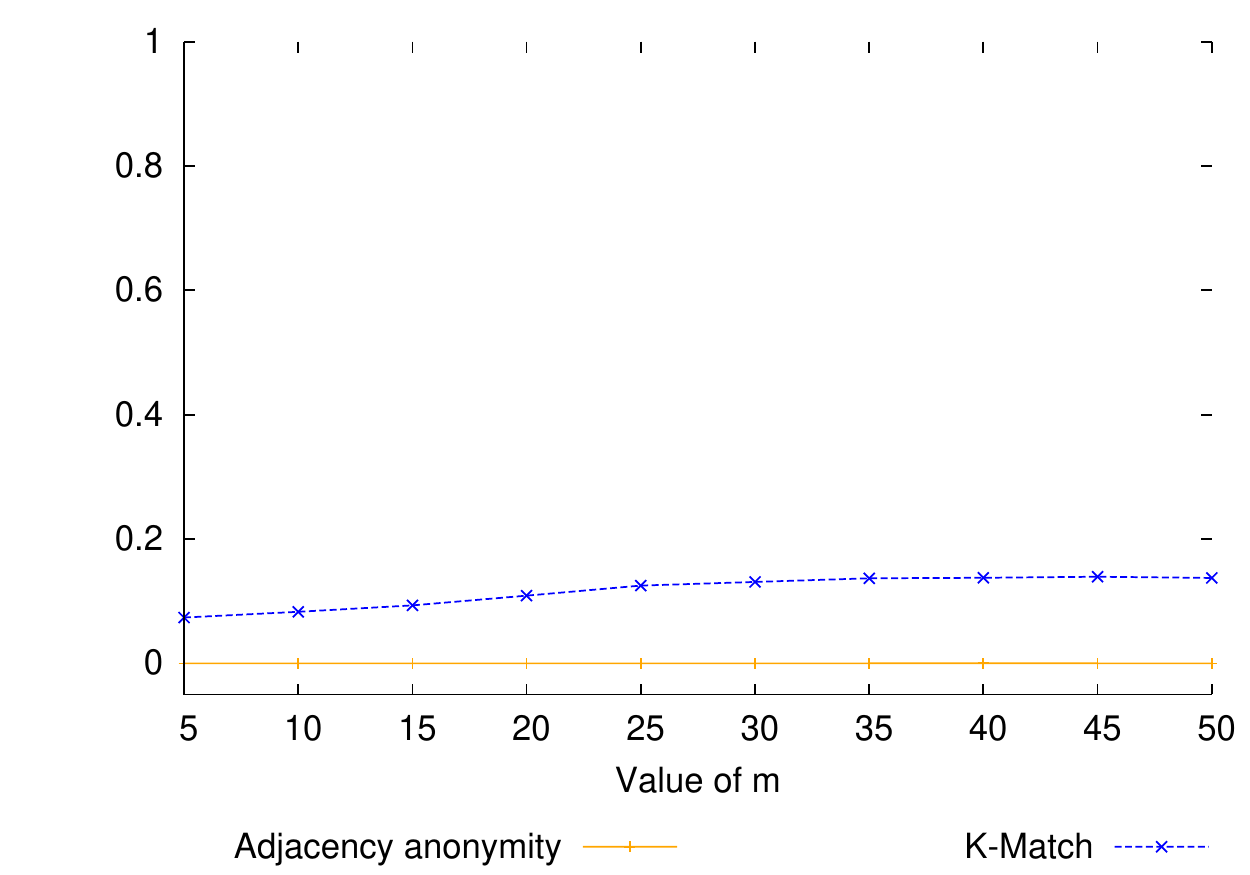}
}
\subfigure[t][ER graphs, $k=5$]{
\includegraphics[scale=.54]{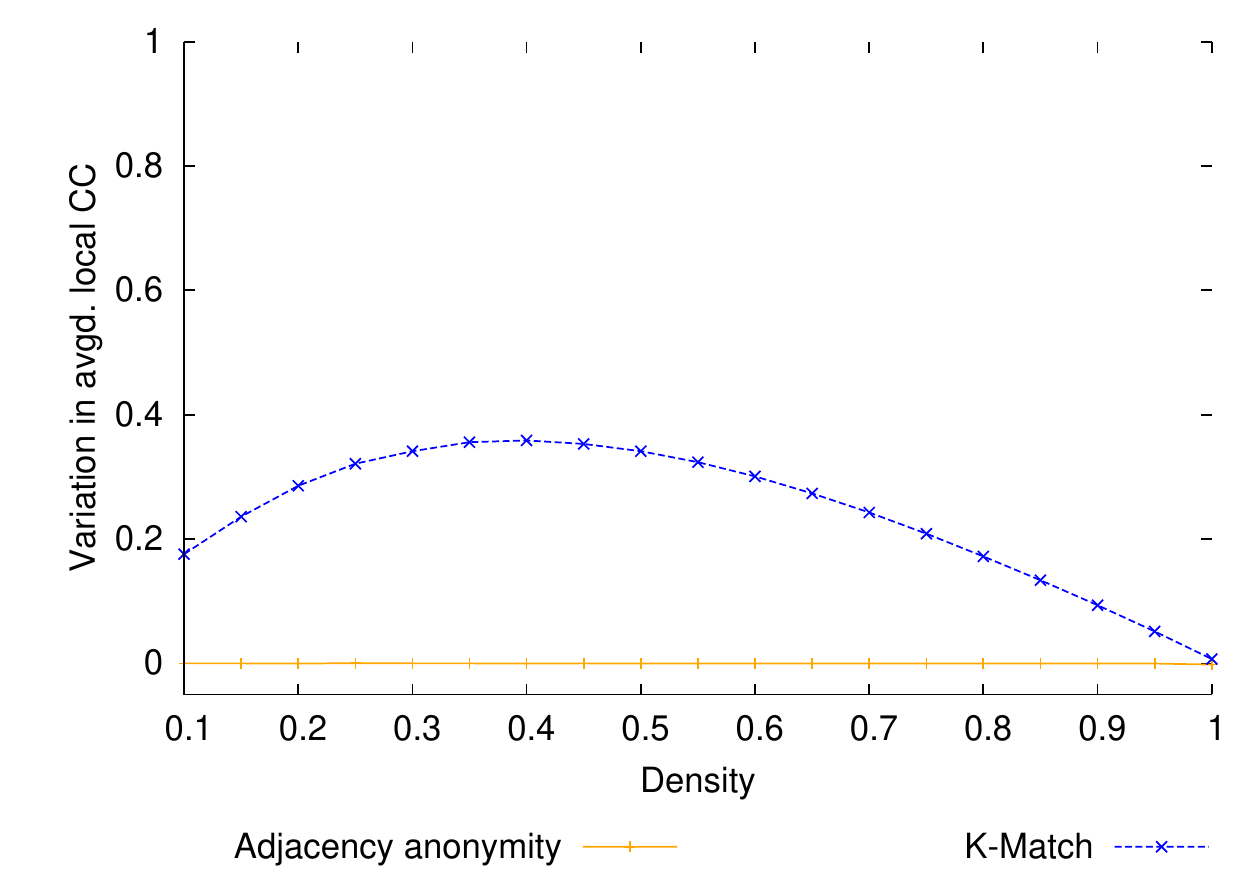}
}
\subfigure[t][BA graphs, $k=5$]{
\includegraphics[scale=.54]{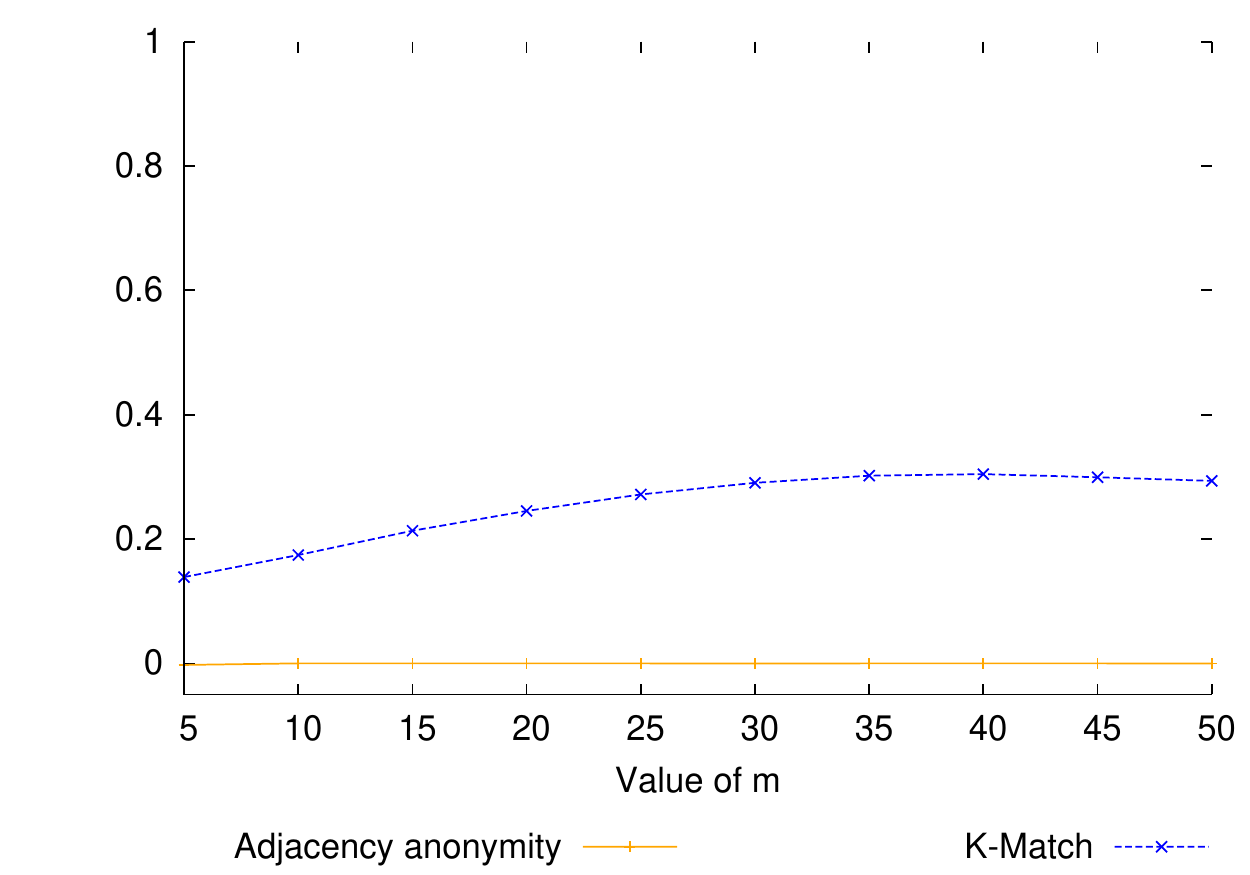}
}
\subfigure[t][ER graphs, $k=8$]{
\includegraphics[scale=.54]{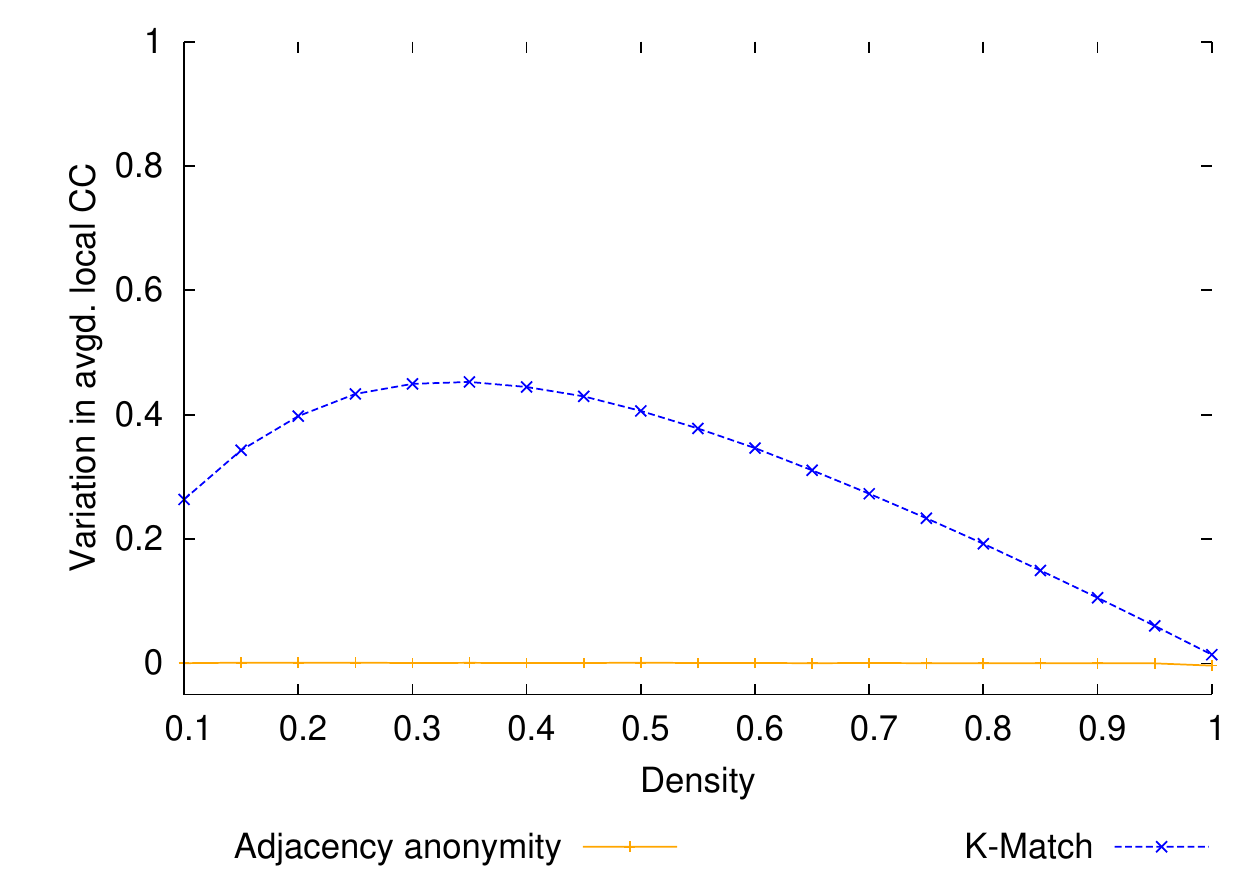}
}
\subfigure[t][BA graphs, $k=8$]{
\includegraphics[scale=.54]{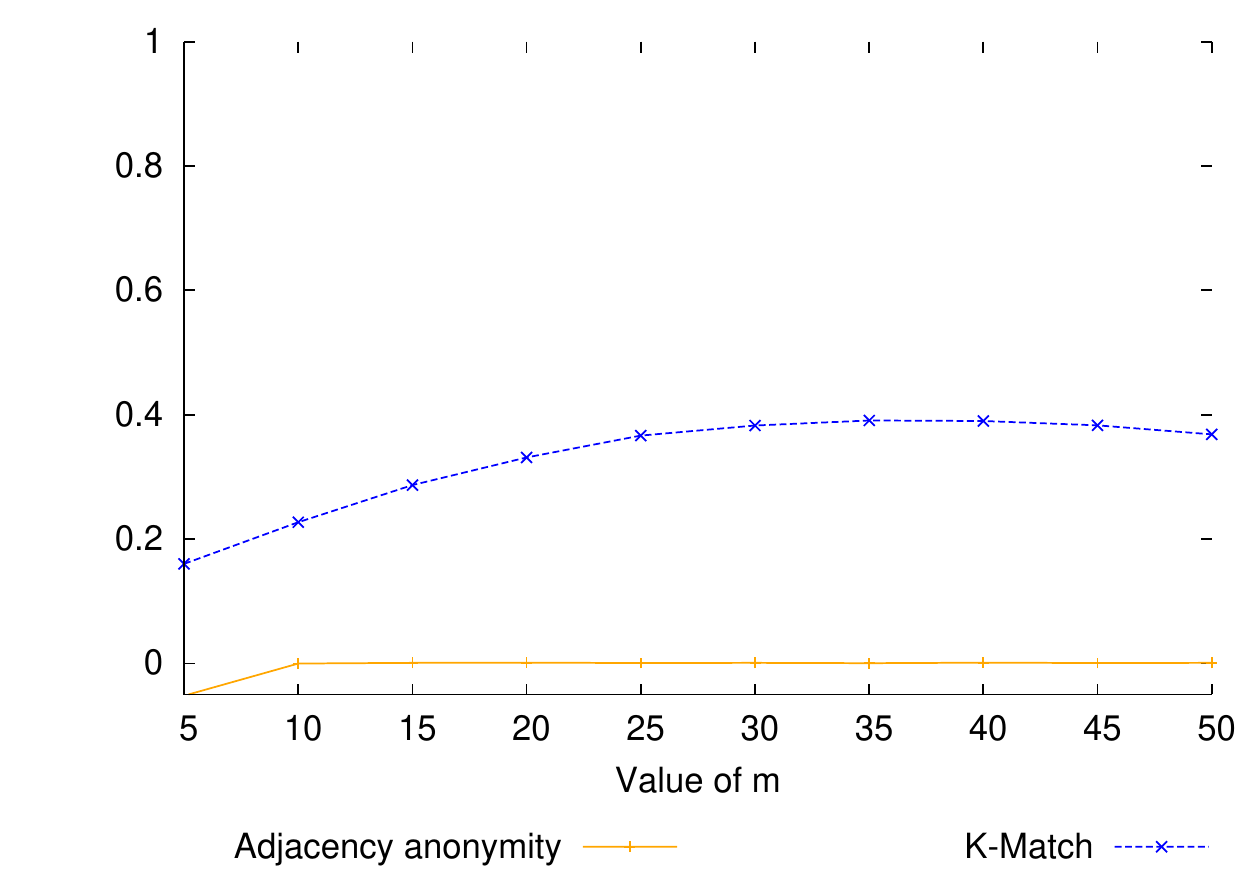}
}
\caption{Variations in averaged local clustering coefficients on the collections 
of Erd\H{o}s-R\'enyi (left) and Barab\'asi-Albert (right) random graphs, 
with $\ell=8$ and $k\in\{2,5,8\}$.} 
\label{fig-avg-lcc-er-ba-200-8-syb-at-1} 
\end{figure} 


\newpage

\appendix
\section{Appendix}\label{app-automorphism}

It is claimed in~\cite{ZCO2009} that every vertex $v$ 
of a $k$-automorphic graph (see Definition~\ref{def-k-auto}) is structurally 
indistinguishable 
from $k-1$ other vertices $\varphi_1(v),\varphi_2(v),\ldots,\varphi_{k-1}(v)$. 

\begin{definition}[$k$-automorphism \cite{ZCO2009}]\label{def-k-auto}\ 
An \emph{automorphism} is an isomorphism from a graph to itself. 
Formally, an automorphism $\gamma$ within a graph $G=(V,E)$ is
a bijective function $\gamma \colon V \to V$, such that
$\forall v_1,v_2\in V \colon (v_1,v_2)\in E \iff 
(\gamma(v_1),\gamma(v_2))\in E$. 
A graph $G$ is said to be $k$-automorphic if there exist $k-1$ 
non-trivial automorphisms $\varphi_1, \varphi_2, \ldots, \varphi_{k-1}$ 
of $G$ such that $\varphi_i(v)\ne\varphi_j(v)$ for every $v\in V_G$ 
and every pair $i,j$ satisfying $1\le i<j\le k-1$.
\end{definition}

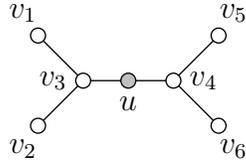
\begin{figure}[h]
	\centering
	\begin{tikzpicture}[inner sep=0.7mm, place/.style={circle,draw=black,
		fill=white},gray1/.style={circle,draw=black!99, 
		fill=black!75},gray2/.style={circle,draw=black!99, 
		fill=black!25},transition/.style={rectangle,draw=black!50,fill=black!20,thick},
		 line
		 width=.5pt]
	
	\coordinate (v) at (0,0);
	\coordinate (r) at (0.6,0);
	\coordinate (l) at (-0.6,0);
	
	\coordinate (ur) at (1.2,0.6);
	\coordinate (lr) at (1.2,-0.6);
	
	\coordinate (ul) at (-1.2,0.6);
	\coordinate (ll) at (-1.2,-0.6);
	
	\draw[black] (ul) -- (l) -- (v) -- (r) -- (ur); 
	\draw[black] (ll) -- (l);
	\draw[black] (lr) -- (r);
	
	\node[gray2] at (v) {};
	\node[place] at (r) {};
	\node[place] at (l) {};
	\node[place] at (ur) {};
	\node[place] at (lr) {};
	\node[place] at (ul) {};
	\node[place] at (ll) {};
	
	\coordinate [label=center:{$u$}] (lv) at (0,-0.3);
	\coordinate [label=center:{$v_1$}] (ull) at (-1.4,0.9);
	\coordinate [label=center:{$v_2$}] (lll) at (-1.4,-0.9);
	\coordinate [label=center:{$v_3$}] (ll) at (-1,0);
	\coordinate [label=center:{$v_4$}] (ll) at (1,0);
	\coordinate [label=center:{$v_5$}] (url) at (1.4,0.9);
	\coordinate [label=center:{$v_6$}] (lrl) at (1.4,-0.9);
	
	\coordinate [label=center:{\textcolor{white}{$v_0$}}] (lv0) at (-2.4,0);
	\coordinate [label=center:{\textcolor{white}{$v_{10}$}}] (lv10) at (2.5,0);
	
	
	\end{tikzpicture}
	\caption{A graph counterexample showing that $k$-automorphism 
	does not achieve the intended privacy protection.}\label{fig-automorphism}
\end{figure}

However, a missing condition in Definition~\ref{def-k-auto}, 
namely requiring every $\varphi_i$ to satisfy $\varphi_i(v)\neq v$, 
invalidates this claim. Consider the graph shown 
in Figure~\ref{fig-automorphism}. 
This graph satisfies $k$-automorphism as defined in Definition~\ref{def-k-auto}, 
as can be verified by the existence of the non-trivial automorphism 
$\gamma=\{(v_1,v_5), (v_2, v_6), (v_3, v_4), (u,u)\}$, yet 
the graph is vulnerable even to the simplest structural 
attack, the degree-based attack, as vertex $u$ is the 
sole vertex with degree $2$. 
It is worth noting that this limitation of $k$-automorphism 
does not necessarily invalidate existing anonymisation methods. 
This is exemplified by the \textsc{K-Match} algorithm itself, 
which does provide the intended protection because the property 
it directly enforces is the so-called \emph{$k$ different matches principle} 
(see~\cite{ZCO2009}), which in turn is not equivalent to $k$-automorphism, 
but stronger.

\end{document}